\newcommand{\Esymb}{{\bf E}}
\DeclareMathOperator*{\E}{\Esymb}
\newcommand{\ve}[1]{\mathbf{#1}}
\newcommand{\cdo}{\mathsf{do}}
\newcommand{\kl}{d_{\mathrm{KL}}}
\newcommand{\tv}{d_{\mathrm{TV}}}
\newcommand{\eps}{\varepsilon}
\newcommand{\wh}{\widehat}
\newcommand{\wt}{\widetilde}
\newcommand{\pa}{\mathbf{Pa}}
\newcommand{\Pa}{\mathbf{Pa}}
\newcommand{\poly}{\mathrm{poly}}
\newcommand{\szk}{\mathrm{SZK}}
\newcommand{\bpp}{\mathrm{BPP}}
\newcommand{\distckt}{\textsc{DistCkt}}
\newcommand{\bayesmar}{\textsc{BayesMarginal}}
\newcommand{\circeval}{\textsc{CircEval}}
\newcommand{\bayesmarmult}{\textsc{BayesMarginalMult}}
\newcommand{\all}{\mathrm{all}}
\newcommand{\np}{\mathrm{NP}}
\newcommand{\ber}{\mathrm{Bernoulli}}
\newcommand{\ignore}[1]{}
\renewcommand{\vec}[1]{{\bf{#1}}}
\newtheorem{theorem}{Theorem}[section]
\newtheorem{claim}[theorem]{Claim}
\newtheorem{lemma}[theorem]{Lemma}
\newtheorem{fact}[theorem]{Fact}
\newtheorem{definition}[theorem]{Definition}
\newtheorem{remark}[theorem]{Remark}
\newtheorem{assumption}[theorem]{Assumption}
\newcommand{\cP}{\mathcal P}
\setlist[enumerate]{leftmargin=*}
\title{Efficient inference of interventional distributions}
\author{
Arnab Bhattacharyya\\
National University of Singapore\\
\texttt{arnabb@nus.edu.sg}
\and
Sutanu Gayen\\
National University of Singapore\\
\texttt{sutanugayen@gmail.com}
\and
Saravanan Kandasamy\\
Cornell University\\
\texttt{sk3277@cornell.edu}
\and
Vedant Raval\\
Indian Institute of Technology Delhi\\
\texttt{skved004@gmail.com}
\and
N. V. Vinodchandran\\
University of Nebraska-Lincoln\\
\texttt{vinod@cse.unl.edu}
}
\date{May 2021}
\begin{document}
\maketitle
\begin{abstract}
We consider the problem of efficiently inferring interventional distributions in a causal Bayesian network from a finite number of observations. Let $\mathcal{P}$ be a causal model on a set $\ve{V}$ of  observable variables on a given causal graph $G$. For sets $\vec{X},\vec{Y}\subseteq \ve{V}$, and setting ${\bf x}$ to $\vec{X}$, let $P_{\bf x}(\vec{Y})$ denote the interventional distribution on $\vec{Y}$ with respect to an intervention ${\bf x}$ to  variables $\vec{X}$. Shpitser and Pearl (AAAI 2006), building on the work of Tian and Pearl (AAAI 2001), gave an exact characterization of the class of causal graphs for which the interventional distribution $P_{\bf x}({\vec{Y}})$ can be uniquely determined. 

We give the first efficient version of the Shpitser-Pearl algorithm. In particular, under natural assumptions, we give a polynomial-time algorithm that on input a causal graph $G$  on observable variables ${\ve V}$, a setting ${\bf x}$ of a set $\vec{X} \subseteq {\ve V}$ of bounded size, outputs succinct descriptions of both an evaluator and a generator for a distribution $\hat{P}$ that is $\varepsilon$-close (in total variation distance) to $P_{\bf x}({\vec{Y}})$ where $Y=\ve{V}\setminus \vec{X}$, if $P_{\bf x}(\vec{Y})$ is identifiable. 

We also show that when $\vec{Y}$ is an arbitrary set, there is no efficient algorithm that outputs an evaluator of a distribution that is $\varepsilon$-close to $P_{\bf x}({\vec{Y}})$ unless all problems that have statistical zero-knowledge proofs, including the Graph Isomorphism problem,  have efficient randomized algorithms. 
\end{abstract}

\section{Introduction}
{\em Density estimation} and {\em parameter learning} are old and classical problems in statistics, studied since the field's inception (e.g., \cite{DG85, Scott15, Silverman18}). A more recent focus has been on designing {\em computationally efficient} learning algorithms, especially in high-dimensional settings. The seminal work of Kearns, Mansour, Ron, Rubinfeld, Schapire, and Sellie \cite{KMRRSS94} set forth the two core computational requirements for distribution learning: (i)  {evaluation}, and (ii) {sampling}/{generation}. An approximate evaluator for a distribution $P$ takes a domain element $\vec{v}$ and outputs the mass of $\hat{P}$ at $\vec{v}$, where $\hat{P}$ is a distribution close to $P$. Similarly, an approximate generator for $P$ takes as input a random seed and outputs an element $\vec{v}$ distributed according to $\hat{P}$. It is desirable to learn both (approximate) evaluator and generator representations of a distribution, as they can be useful for different downstream inference tasks.

Distribution learning is especially interesting when we cannot directly access the distribution to be learned. One of the most prominent example of such a situation arises in {\em causal effect estimation}. To estimate the effect of a treatment intervention on some outcome, the ``gold standard'' is to conduct a randomized experiment where a random sub-population is forcibly treated. However, in many settings (e.g., medicine, economics), it is often not feasible to conduct such experiments, and the only recourse is to make use of observational data. In this work, we study the problem of efficiently learning evaluators and generators for interventional distributions from observational samples.  

Our discussion of causality will be in Pearl's language of {\em causal Bayesian networks} (or {\em causal Bayes nets} for short) \cite{Pearl09}. A causal Bayes net is a standard Bayes net that is reinterpreted causally. Specifically, it makes the assumption of {\em modularity}:  for any variable $X$, the dependence of $X$ on its parents is an autonomous mechanism that does not change even if other parts of the network are changed. This allows assessment of external interventions, such as those encountered in policy analysis, treatment management, and planning. 

We now fix some basic notation for causal Bayes nets to ground the subsequent discussion.
The underlying structure of causal Bayes net $\cP$ is a directed acyclic graph $G$. The graph $G$ consists of $n+h$ nodes where $n$ nodes correspond to the {\em observable} variables $\vec{V}$ while the $h$ additional nodes correspond to $h$ {\em hidden variables} $\vec{U}$. We assume that the observable variables take values over a finite alphabet $\Sigma$. Defining the key notion of {\em c-components} is deferred to Section \ref{sec:prelims}.

The observational distribution $P$ on $\vec{V}$ is obtained by interpreting $\cP$ as a standard Bayes net over $\vec{V} \cup \vec{U}$ and marginalizing to $\vec{V}$.
An {\em intervention} is specified by a subset $\vec{X} \subseteq \vec{V}$ of variables and an assignment\footnote{Consistent with the convention in the causality literature, we will use a lower case letter (e.g., $\vec{x}$) to denote an assignment to the subset of variables corresponding to its upper case counterpart (e.g., $\vec{X}$).} $\vec{x} \in \Sigma^{|\vec{X}|}$. In the  interventional distribution, the variables $\vec{X}$ are fixed to $\vec{x}$, while each variable $W \in (\vec{V} \cup \vec{U})\setminus \vec{X}$ is sampled as it would have been in the original Bayes net, according to the conditional distribution $W \mid \Pa(W)$, where $\Pa(W)$ (parents of $W$) consist of either variables previously sampled in the topological order of $G$ or variables in $\vec{X}$ set by the intervention. The marginal of the resulting distribution to $\vec{V}$ is the interventional distribution denoted by $P_{\vec{x}}$. 

The problem of interest for us is to efficiently learn $P_{\vec{x}}$, given samples from $P$. If we leave aside considerations of efficiency, this problem was solved in the important prior works of Shpitser and Pearl \cite{SP06} and, independently, Huang and Valtorta \cite{HV06}. We focus here on the work of Shpitser and Pearl.  They characterized the set of graphs $G$ for which $P_{\vec{x}}$ is not statistically identifiable from $P$. Moreover when $P_{\vec{x}}$ is identifiable, and assuming $P$ is strictly positive, they gave an algorithm that explicitly generates a formula for $P_{\vec{x}}$ in terms of $P$. 

If one only has access to samples from $P$ though, Shpitser and Pearl's algorithm does not guarantee that $P_{\vec{x}}$ is learned up to bounded error in total variation distance, either as an evaluator or as a generator, using a polynomial number of samples and time. This is the guarantee which we achieve, when the graph $G$ has bounded in-degree and bounded c-components, the set $\vec{X}$ is bounded, and there is no marginalization. 

\begin{theorem}[Informal]\label{thm:main1inf}
Let $\cP$ be a causal Bayes net where the underlying DAG $G$ has in-degree at most $d$ and c-component size at most $k$. There is an algorithm that given such a causal Bayes net $G$, a subset $\vec{X} \subset \vec{V}$ of size at most $\ell$ such that $P_{\vec{x}}(\vec{Y})$ is identifiable from $P$ where 
$\vec{Y} = \vec{V}\setminus \vec{X}$, and $\eps$; with constant probability, outputs (learns) a distribution $\hat{P}$ as an approximate evaluator and generator 
so that 
$\tv({P}_{\vec{x}}(\vec{Y}),\hat{P}) \leq \eps$.   The algorithm uses $m=\tilde{O}\left(\frac{nk^{O(k)}|\Sigma|^{O(k\ell+kd)}}{\alpha^{k\ell}\eps^2}\right)$ samples and $O(m(n+|\Sigma|^{kd+d}))$ time, where $\alpha$ is the minimum probability for any non-empty event defined by the variables in c-components intersecting $\vec{X}$ and such c-components' parents. 
\end{theorem}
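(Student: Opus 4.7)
The plan is to combine the Tian--Pearl c-factor decomposition of identifiable interventional distributions with empirical parameter estimation of each local factor, and then to bound the propagated total variation error via a hybrid argument. When $P_{\vec x}(\vec Y)$ is identifiable and $\vec Y = \vec V \setminus \vec X$, the Tian--Pearl/Shpitser--Pearl theory yields a product decomposition
\[
P_{\vec x}(\vec y) \;=\; \prod_{S} Q[S](\vec y),
\]
indexed by the c-components $S$ of the relevant subgraph of $G$. For c-components $S$ disjoint from $\vec X$, $Q[S]$ is a product of observational conditionals $\{P(V_i \mid \Pa(V_i))\}$, each a table of size at most $|\Sigma|^{d+1}$. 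For the at most $\ell$ c-components that meet $\vec X$, $Q[S]$ is a functional of a joint marginal on at most $O(k(\ell+d))$ variables (namely $S$, $\vec X$, and $\Pa(S)$), given by the Tian--Pearl identification formula as a ratio of marginals of that joint distribution.

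For the estimation step, from $m$ i.i.d.\ samples of $P$ I would form the empirical estimate $\wh P(V_i \mid \Pa(V_i))$ of every local conditional, together with the empirical joint marginal required by the identification formula for each c-component meeting $\vec X$. By standard Chernoff and union bounds, each such estimate is $\eps_0$-accurate with $\tilde O(|\Sigma|^{O(k(\ell+d))}/\eps_0^2)$ samples. These empirical tables form the succinct output. The evaluator on query $\vec y$ multiplies the relevant table entries. The generator proceeds in topological order over c-components: for each component disjoint from $\vec X$, sample each variable from its estimated conditional given its parents; for each component meeting $\vec X$, sample the $\vec Y$-variables jointly from the estimated joint conditioned on $\Pa(S)$ taking its already-sampled values and with the $\vec X$-entries of $S$ fixed according to $\vec x$.

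The main obstacle is converting these per-factor guarantees into an overall TV bound on $\dtv(P_{\vec x}(\vec Y), \wh P)$. A naive union bound is not enough because the factors for c-components meeting $\vec X$ are ratios of probabilities rather than probabilities themselves, and small additive errors in a denominator can blow up multiplicatively. I would use a hybrid argument, replacing one true factor at a time by its empirical version and bounding the induced TV change. For the disjoint c-components, the chain rule for TV distance gives an additive error per factor of $O(\eps_0)$, controllable by demanding $n \eps_0 \le \eps/2$. For the ratio factors, I would invoke the positivity hypothesis that every relevant denominator event has probability at least $\alpha$; this makes each such factor $O(\alpha^{-O(k)})$-Lipschitz in its empirical inputs, amplifying the per-factor error by roughly $\alpha^{-O(k)}$ per c-component meeting $\vec X$. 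With at most $\ell$ such c-components, the total multiplicative blowup is $\alpha^{-O(k\ell)}$, which is precisely the $\alpha^{k\ell}$ factor appearing in the stated sample complexity. Setting $\eps_0$ to absorb this blowup and taking the union bound over all $O(n)$ local estimates yields the theorem.
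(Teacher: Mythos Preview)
Your decomposition and the Lipschitz/ratio treatment of the c-components meeting $\vec X$ are broadly in line with the paper, but your handling of the ``large'' part (the c-components disjoint from $\vec X$) has a real gap. First a minor point: the factors there are not $P(V_i\mid \Pa(V_i))$ but $P(V_i\mid \vec Z_i)$ where $\vec Z_i$ is the \emph{effective} parent set $\Pa^{+}(\vec C)\cap\{V_1,\dots,V_{i-1}\}$ for the c-component $\vec C$ of $V_i$, of size up to $kd{+}d$, not $d$. More importantly, your plan ``learn each conditional to TV accuracy $\eps_0$ and apply the TV chain rule with $n\eps_0\le\eps/2$'' forces $\eps_0=\Theta(\eps/n)$ and hence $\tilde O(n^2/\eps^2)$ samples rather than the claimed $\tilde O(n/\eps^2)$; and learning every $P(V_i\mid \vec z_i)$ uniformly over all parent configurations would require a positivity lower bound on those configurations that the hypothesis does not supply (the $\alpha$-bound is assumed only on $\Pa^{+}(\vec C_i)$ for c-components $\vec C_i$ meeting $\vec X$).

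The paper avoids both problems by learning the large part $Q=P(\vec C_{>\ell}\mid \cdo(\vec c_{\le\ell}))$ in KL divergence via the add-$1$ (Laplace) estimator. The KL chain rule reads $\dkl(Q,\wh Q)=\sum_i \E_{\vec z_i\sim Q}\bigl[\dkl(P(V_i\mid \vec z_i),\wh P(V_i\mid \vec z_i))\bigr]$; with $m$ samples, a configuration of probability $p$ yields roughly $mp$ conditional samples and add-$1$ KL error $\lesssim |\Sigma|/(mp)$, so the weighted contribution $p\cdot |\Sigma|/(mp)=|\Sigma|/m$ is \emph{independent of $p$}. Summing over $n$ variables and $|\Sigma|^{kd+d}$ configurations and applying Pinsker gives the linear-in-$n$ bound; low-probability configurations are handled by a separate case (add-$1$ never produces KL larger than $\log(m{+}|\Sigma|)$). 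A further subtlety you miss entirely is that this expectation is under $Q$ while the samples come from $P$; the paper bridges the two via a pointwise inequality $P(\vec v)/Q(\vec w)\ge \alpha^{|\vec C_{\le\ell}|}$, and \emph{this} is where the $\alpha^{k\ell}$ in the sample complexity actually originates. In the paper's analysis the small-part estimates contribute only $\alpha^{-2}$: each $P(v_i\mid \vec z_i)\ge\alpha$ is learned to \emph{multiplicative} $(1\pm\eps)$ accuracy with $\tilde O(\alpha^{-2}\eps^{-2})$ samples, and multiplicative errors compose through the ID recursion with only a $(3k)^{O(k)}\ell$ blowup, not an $\alpha^{-O(k\ell)}$ one.
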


Note that unlike in \cite{SP06}, we do not require $P$ to be a strictly positive distribution. Indeed if we let $\alpha$ above be a lower bound on $P(\vec{v})$, then $\alpha \leq |\Sigma|^{-n}$, and hence, the above sample/time complexity would be exponential. So, it is crucial for efficiency that $\alpha$ be a lower bound on the probability of events defined by only a bounded number of variables.

Theorem \ref{thm:main1inf} is proved by showing how to implement the Shpitser-Pearl algorithm while guaranteeing the bound on the distance between the learned and true interventional distributions. This analysis relies on recent work \cite{bgmv20, chowliu} which examined fixed-structure learning of discrete Bayes nets without hidden variables. However, unlike the `chain-like' factorization $P(\vec{v}) = \prod_i P(v_i \mid \vec{v}_{\Pa(i)})$ of the probability mass function for a Bayes net, the interventional probability distribution may not admit a nice factorization; see the examples later in Section \ref{sec:prelims}. Our algorithm exposes a decomposition of $P_{\vec{x}}$ into a product of interventional distributions that are either on a small sample space or have `chain-like' factorization, and we show both can be learned efficiently using samples from $P$.  The sample complexity of Theorem~\ref{thm:main1inf} is nearly optimal in terms of $n$ and $\eps$ but it remains an interesting open problem to improve the dependence on $d, k, \ell, \alpha$ and $ \Sigma$.

One drawback of Theorem \ref{thm:main1inf} is that it considers the interventional distribution on $\vec{V}\setminus \vec{X}$, whereas the causal effect on only a subset  $\vec{Y}$ of outcome variables may be relevant and is considered by Shpitzer-Pearl~\cite{SP06}. We show that unfortunately, we can not hope for an analog of Theorem \ref{thm:main1inf} for an arbitrary subset $\vec{Y}$. In particular,  we show that the existence of a learning algorithm that outputs an approximate evaluator representation of $P_{\vec{x}}(\vec{Y})$ for an arbitrary $\vec{Y} \subseteq \vec{V}$ will lead to  efficient randomized algorithms for all problems in the complexity class $\szk$ which contains hard problems such as the Graph Isomorphism problem. 
Indeed, we establish the hardness even when the intervention set is empty, i.e., when the goal is to learn the marginal on a subset of variables and when the Bayes net has indegree at most 2.

\begin{theorem}[Informal]\label{thm:main2inf} Suppose there is a randomized polynomial-time algorithm that on input a Bayes net (without hidden variables) distribution $P$ on $\vec{V}$,  $\vec{Y} \subseteq \ve{V}$, and $\eps$; outputs a representation $r$ of a distribution $\hat{P}$ so that (1) $\tv(P(\vec{Y}),\hat{P}) \leq \eps$, and (2) for every $\vec{y}$, $\hat{P}(\vec{y})$ can be evaluated (or even multiplicatively $(1\pm\eps)$-approximated) efficiently using $r$.  Then all problems that have statistical zero knowledge (the complexity class $\szk$), including the Graph Isomorphism problem, can be solved in  randomized polynomial time.  \end{theorem}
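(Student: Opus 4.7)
The plan is to reduce from an $\szk$-complete problem to the hypothesized evaluation problem. A convenient target is the \emph{Entropy Difference} (ED) problem of Goldreich and Vadhan: given two Boolean circuits $C_0, C_1$ (each with $n$ inputs and $m$ outputs), decide whether $H(C_0(W)) \geq H(C_1(W)) + 1$ or $H(C_0(W)) \leq H(C_1(W)) - 1$, where $W$ is uniform on $\{0,1\}^n$. ED is $\szk$-complete under polynomial-time reductions, so a $\bpp$ algorithm for ED places all of $\szk$ in $\bpp$. I will estimate each of $H(C_0(W))$ and $H(C_1(W))$ to within error $\ll 1$ using the hypothesized algorithm as a subroutine, and then compare them.

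For a single circuit $C$ I encode $C$ as a Bayes net $\mathcal{P}$ of in-degree at most $2$ with \emph{no hidden variables}: each input wire becomes a root node with a $\ber(1/2)$ marginal, and each gate of $C$ becomes an observable node of in-degree $2$ whose conditional distribution deterministically computes the gate's truth table from its two parents. Let $\vec{Y} \subseteq \vec{V}$ be the nodes corresponding to the output wires of $C$; then the marginal $P(\vec{Y})$ is exactly the distribution of $C(W)$. I feed $\mathcal{P}$, $\vec{Y}$, and a small $\eps = 1/\poly(n,m)$ to the hypothesized algorithm to obtain a representation $r$ of some $\hat{P}$ with $\tv(\hat{P}, P(\vec{Y})) \leq \eps$, whose mass $\hat{P}(\vec{y})$ can be multiplicatively $(1 \pm \eps)$-approximated from $r$ in polynomial time. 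Samples from $P(\vec{Y})$ come for free by forward-sampling $\mathcal{P}$.

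The delicate step is to turn evaluator-plus-sampler access into an accurate estimate of $H(P(\vec{Y}))$. The difficulty is that an adversarial $\hat{P}$ close to $P(\vec{Y})$ in total variation can still assign vanishing mass to heavy elements of $P(\vec{Y})$, making $\kl(P(\vec{Y}) \,\|\, \hat{P})$ arbitrarily large; the naive plug-in entropy estimator $\frac{1}{t}\sum_i \log(1/\hat{P}(Y_i))$ therefore has unbounded bias and variance. I repair this by smoothing $\hat{P}$ with the uniform distribution, defining $\hat{P}'(\vec{y}) = (1-\gamma)\hat{P}(\vec{y}) + \gamma \cdot 2^{-m}$, which is still evaluable from $r$ and satisfies $\hat{P}'(\vec{y}) \geq \gamma \cdot 2^{-m}$ for every $\vec{y}$. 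The plug-in estimator
\[ \widehat{H} = \frac{1}{t} \sum_{i=1}^t \log \frac{1}{\hat{P}'(Y_i)}, \qquad Y_i \sim P(\vec{Y}), \]
has per-sample range $O(m + \log(1/\gamma))$, so polynomially many samples concentrate $\widehat{H}$ around $\mathbb{E}_{Y \sim P(\vec{Y})}[\log(1/\hat{P}'(Y))]$. Combining $|\mathbb{E}_P f - \mathbb{E}_{\hat{P}'} f| \leq \tv(P, \hat{P}') \cdot (\max f - \min f)$ with the Lipschitz continuity of Shannon entropy in total variation distance shows this expectation is within $O\!\left((\eps + \gamma)(m + \log(1/\gamma))\right)$ of $H(P(\vec{Y}))$. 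Choosing $\eps = \gamma = 1/m^{c}$ for a suitably large constant $c$ drives the total error well below $1/4$, enough to distinguish the two sides of ED after applying the procedure separately to $C_0$ and $C_1$. The main obstacle is exactly the one the smoothing addresses --- the adversary's freedom to place essentially no mass on heavy points of $P$ --- after which the rest is a standard Monte Carlo argument; this yields $\szk \subseteq \bpp$ and in particular a $\bpp$ algorithm for Graph Isomorphism.
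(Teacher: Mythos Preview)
Your argument is correct and shares with the paper the key encoding step: a Boolean circuit becomes a Bayes net of in-degree $2$ (uniform bits at the sources, deterministic gate conditionals), and the circuit's output distribution is exactly the marginal $P(\vec{Y})$ on the output nodes. Where you diverge is in the choice of $\szk$-complete problem and, consequently, in what you estimate with the evaluator. The paper reduces from $\distckt$ (distinguish $\tv(C,D)\le 1/3$ from $\tv(C,D)\ge 2/3$), builds evaluators $\mathcal{E}_C,\mathcal{E}_D$ for both output marginals, and then invokes a black-box result from \cite{bgmv20} that estimates $\tv(P,Q)$ to additive $O(\eps)$ from samples of $P$ and evaluator access to both. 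You reduce from Entropy Difference and estimate each $H(C_i(W))$ separately via the smoothed plug-in $\widehat{H}=\frac{1}{t}\sum_i\log(1/\hat P'(Y_i))$; the mixing with $\gamma\cdot 2^{-m}$ is exactly what is needed to bound the range of $\log(1/\hat P')$ and hence both the Hoeffding deviation and the bias $|\E_P[\log(1/\hat P')]-H(P)|\le \tv(P,\hat P')\cdot\mathrm{range}+|H(\hat P')-H(P)|$. Your route is more self-contained (you do the estimation analysis explicitly rather than citing it away) and needs an evaluator for only one distribution at a time; the paper's route is shorter on the page because the $\tv$-estimation lemma absorbs the work you do by hand.
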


Thus the problem of outputting an  evaluator representation of a distribution that approximates the effect of an interventional distribution on an arbitrary subset, even when it is  identifiable in the sense of Shpitser-Pearl algorithm, is computationally hard.  The formal statement of this theorem along with the proof is presented in Section~\ref{sec:hard}.

\begin{remark}
{\em
Using notations of Theorem~\ref{thm:main1inf}, the lower bound of Theorem~\ref{thm:main2inf} corresponds to the case $d=2,k=1,l=0$ and $\alpha=1$. We also note that for product distributions on $n$-variables, learning such an approximate evaluator for marginals is trivial.
}
\end{remark}

\subsection{Related Work}
Pearl~\cite{pearl89} introduced Causal Bayesian Networks to formally define interventions and causal effects. Tian and Pearl~\cite{tian2002general} first studied the problem of identification of causal effects from observations, and gave graphical characterizations for such identifications when the intervention is atomic, i.e. consisting of a single variable and we are interested to determine the effect of this variable on the rest of the variables. They also gave an expression for such an atomic intervention in terms of observational quantities whenever the later is identifiable. Subsequently, Shpitser and Pearl~\cite{SP06} and Huang and Valtorta~\cite{HV06} independently generalized~\cite{tian2002general} for non-atmoic interventions on any subset of variables, giving graphical characterizations for identifiability and an expression of the causal effect whenever it is identifiable. Recently Barenboim et al.~\cite{bareinboim} gave estimators for causal effects using weighted regression.

Finite sample bounds for causal inference are being studied only recently. Acharya et al.~\cite{acharya2018learning} gave finite bounds for goodness-of-fit testing for non-parametric causal models using observation and experimental data. Bhattacharya et al.~\cite{bhattacharyya2020learning} gave finite bounds for learning the effect of atomic interventions on the rest of the graph using observational samples, giving finite bounds for~\cite{tian2002general}. The finite sample and time bounds shown in this paper can be seen as a generalization of~\cite{bhattacharyya2020learning} for learning non-atomic interventions on the rest of the graph using observational samples. Our work can be seen as a finite-sample version of~\cite{SP06}.

Testing various properties of polynomial-time samplable distributions such as uniformity, entropy, and closeness is a fundamental problem that characterizes several zero-knowledge complexity classes~\cite{watson2016complexity,SV03,malka2015achieve}. We show our hardness result employing a connection between testing polynomial-time samplable distributions and testing Bayesian networks. Bayesian network is an important statistical model for which finite sample bounds for testing and learning have been given recently~\cite{bgmv20,cdks,chowliu}.

\ignore{
A causal model for a system of variables describes not only how the variables are associated with each other but also how they would change if they were to be acted on by an external force. For example, in order to have a proper discussion about global warming, we need more than just an associational model which would give the correlation between human CO$_2$ emissions and  Arctic temperature levels. We instead need a causal model which would predict the climatological effects of humans reducing CO$_2$ emissions by (say) 20$\%$ over the next five years. Notice how the two can give starkly different pictures: if global warming is being propelled by natural weather cycles, then changing human emissions won't make any difference to temperature levels, even though human emissions and temperature may be correlated in our dataset (just because both are increasing over the timespan of our data).

Causality has been a topic of inquiry since ancient times, but a modern, rigorous formulation of causality came about in the twentieth century through the works of Pearl, Robins, Rubins, and others \cite{IR15,pearl00,rubin2011,hernan-book}. In particular, Pearl \cite{pearl00} recasted causality in the language of {\em causal Bayesian networks} (or {\em causal Bayes nets} in short). A causal Bayes net is a usual Bayes net that is reinterpreted causally. Specifically, it makes the assumption of modularity: each parent-child relationship in the Bayes net is an autonomous mechanism that can change without changing the other mechanisms. This allows assessment of external interventions, such as those encountered in policy analysis, treatment management, and planning. The idea is that by virtue of the modularity assumption, an intervention simply amounts to a modified Bayes net where some of the parent-child mechanisms are altered while the rest are kept the same.  

The information contained in a causal Bayes net decomposes into two parts: (1) a directed acyclic graph $G$, and (2) a probability distribution $P$ that is consistent with $G$. The structure of $G$ qualitatively describes causal relationships, while $P$ along with $G$ allows for quantitative evaluation of post-interventional effects. We may assume that a domain expert knows the causal graph $G$. Our objective in this work is to obtain information about the distribution on the set $V$ of observable variables when a subset $X\subseteq V$ has been fixed to an assignment $\vec{x}$ by an external intervention. We denote the probability distribution over $V$ in this experiment as $P(V \mid \cdo(\vec{x}))$. 

We focus attention on the case that $X$ is a single observable variable, so that interventions on $X$ are {\em atomic}. 
We study the following estimation problems:
\begin{enumerate}
\item[1.]\textbf{(Global Interventional Effect)} Learn the probability distribution $P(V \mid \cdo({x}))$, using samples from $P$ and knowledge of $G$.
\item[2.]\textbf{(Average Causal Effect)} For an observable variable $Y$ of interest, estimate $\E_P[Y \mid \cdo({x})]$ using samples from $P$ and knowledge of $G$.
\item[3.]\textbf{(Identity Testing)} Given samples from two causal Bayes nets $P$ and $Q$ on the same known graph $G$, test whether there is some intervention $x$ to $X$ such that the two distributions $P(V \mid \cdo(x))$ and $Q(V \mid \cdo(x))$ are far from each other (in total variation distance).
\end{enumerate}
For all three problems, our goal is to design algorithms that are efficient in terms of sample complexity as well as time complexity. We study these problems in the non-parametric setting, but we assume that all the variables under consideration are discrete.

Even if we disregard complexity considerations, it may be impossible to determine $P(V \mid \cdo(x))$ from $P(V)$ and the graph $G$. The simplest example is the so-called ``bow-tie graph'' on two observable variables $X$ and $Y$ (with $X$ being a parent of $Y$) and a hidden variable $U$ that is a parent of both $X$ and $Y$. Here, it's easy to see that $P(X,Y)$ does not uniquely determine $P(Y \mid \cdo(x))$.                                                                                                                                                                                                                                                                                                                                                                                                                                                                           Tian and Pearl studied in general when the interventional distribution $P(V \mid \cdo(x))$ is identifiable from the observational distribution $P(V)$. They characterized the class $\mathcal{G}_X$ of directed acyclic graphs graphs such that for any $G \in \mathcal{G}_X$, for any probability distribution $P$ consistent with $G$, and for any intervention $x$ to $X$, $P(V \mid \cdo(x))$ is identifiable from $P(V)$. Thus, throughout, we freely assume that $G \in \mathcal{G}_X$, because otherwise, $P(V \mid \cdo(x))$ is not well-defined.

The overarching goal of our work is to derive a minimal set of additional assumptions such that the above estimation problems admit efficient algorithms. \cite{tian-pearl} as well as later related works all assume that the distribution $P$ is {\em positive}, meaning that $P(\vec{v}) > 0$ for all assignments $\vec{v}$ to $V$. This is clearly an extremely strong assumption, especially if the number of variables in $V$ is large. We show that under reasonable assumptions about the structure of $G$, we only need to assume positivity for the marginal of $P$ over a bounded number of variables. Moreover, under this assumption, there exist sample- and time-efficient algorithms for each of the above three problems.
}

\section{Preliminaries} \label{sec:prelims}

\begin{figure}
\begin{algorithm}[H]
\SetKwInOut{Input}{Input}
\SetKwInOut{Output}{Output}
\Input{   Assignments $\ve{x},\ve{y}$, Observational distribution $P$, ADMG $G$}
\Output{   $P_{\ve{x}}(\ve{y})$}
\nl \If{$\ve{x} = \emptyset$}
{\Return $\sum_{\ve{v \setminus y}} P(\ve{v})$}

\nl \If{$\ve{V} \setminus \mathrm{An}(\ve{Y})_{G} \neq \emptyset$}
{\Return ID($\ve{y}, \ve{x} \cap \mathrm{An}(\ve{Y})_{G}, 
\sum_{\ve{v} \setminus \mathrm{An}(\ve {Y})_{G}} P , \mathrm{An}(\ve{Y})_{G}$).}

\nl \If{$\vec{W} := (\vec{V \setminus X}) \setminus \mathrm{An}(\ve{Y})_{G_{\overline{\ve{X}}}}$ is non-empty}{For arbitrary assignment $\ve{w}$, \Return  ID($\ve{y}, \ve{x \cup w} , P, G$).}

\nl \If{$C(G \setminus \ve{X} = \{ \ve{S}_1, \ldots, \ve{S}_k \}$)}
{\Return $\sum_{\ve{v} \setminus (\ve{y \cup x})} \prod_{i} \text{ID(} s_i, \ve{v} \setminus \ve{s}_i \text{)}, P, G $.}

\nl \If{$C(G \setminus \ve{X}) = \{ \ve{S} \}$ is singleton}
{
a) \If{$C(G) = G$}{\Return FAIL.}
b) \If{$\ve{S} \in C(G)$} {\Return $\sum_{\ve{s \setminus y}} \prod_{i | V_i \in \ve {S}} P(v_i \mid v_1, v_2, \ldots, v_{i-1})$.}
c) \If{$\exists \ve{S}^{\prime}: \ve{S} \subset \ve{S}^{\prime} \in C(G) $}
{\Return ID($\vec{y}, \vec{x}\cap \vec{S}^{\prime}, \prod_{i | V_i \in \ve{S}^{\prime}}
P(V_i \mid (V_1,V_2,\ldots,V_{i-1}) \cap \ve{S}^{\prime}, (v_1,\ldots,v_{i-1})\setminus \ve{S}^{\prime}), \ve{S}^{\prime} $ ).}
}
\caption{ID($\ve{y,x},P,G$)}
\label{algo:id}
\end{algorithm}
\end{figure}

\ignore{
Some of the formal definitions and notation used in the paper are discussed in Appendix~\ref{sec:mprelims}.  In Appendix~\ref{sec:id}, we provided a detailed explanation of the identification algorithm (Algorithm~\ref{algo:id}) of \cite{SP06}.  Our hardness result is discussed in Appendix~\ref{sec:hard}, where we show that the problem of outputing an evaluator/sampler representation of a distribution that approximates the effect of an interventional distribution on an arbitrary subset is computationally hard.
}

\paragraph{Notation.} We use bold and non-bold fonts to denote sets and singleton variables respectively. We use capital and small letters to denote variables and values taken by them respectively. 

For two distributions $P$ and $Q$ over $\Omega$, their total variation distance $\tv(P,Q)=\frac{1}{2}\sum_{i\in \Omega}|P_i-Q_i|$ and their KL distance $\kl(P,Q)=\sum_{i\in \Omega} P_i\ln \frac{P_i}{Q_i}$. Pinsker's inequality says $\tv(P,Q)\le \sqrt{0.5 \cdot \kl(P,Q)}$.

\begin{definition}
A Bayesian network (or Bayes net in short) $P$ is a distribution over a set of variables $\vec{V}$ defined with respect to a directed acyclic graph $G$. Each $v\in \vec{V}$ takes values from an alphabet $\vec{\Sigma}$. $G$ encodes dependence relationship between the variables: every variable is independent of its non-descendants conditioned on its immediate parents. Hence according to the Bayes rule, $P$ factorizes in the topological order of $G$ such that each $v\in V$ is only conditioned on its parents: $P[\vec{V}]=\prod_i P[V_i \mid \vec{V_{\mathrm{parents}(i)}]}$.
\end{definition}

\paragraph{Causality.}
We follow Pearl's formalism of Causal Bayesian Networks~\cite{Pearl09}.

\begin{definition}[Causal Bayes Net] \label{def:causal Bayesnet}
A Causal Bayes net $P$ is a Bayes net over the variables $\vec{V} \cup \vec{U}$ defined with respect to a DAG over $\vec{V}\cup \vec{U}$. In this DAG, each edge $(X\rightarrow Y)$ denotes a {\em directed} causal relationship from $X$ to $Y$. Only the variables in $\ve{V}$ (observed variables) but none from $\ve{U}$ (hidden variables) can be observed in the samples from $P$.

Given a $X\subseteq V$, and an assignment $x$ to $X$, the interventional distribution $P_x(V)$, is defined on the DAG where the set of incoming edges to $X$ are removed and $X$ is fixed to $x$ with probability 1. All other variables follow the usual parent-child relation and factorization of $P$.
\end{definition}

It is a standard to assume that the unobservable variables in $\vec{U}$ have exactly two observable children.  
Causal Bayes nets are often represented using a graphical representation over only the observed vertices $\vec{V}$.
An Acyclic Directed Mixed Graph (ADMG in short) is such a representation, which consists of a set of variables and two kinds of edges: directed edges $E^{\rightarrow}$ and bi-directed edges $E^{\leftrightarrow}$. The directed edges $(X\rightarrow Y)$ denote parent-child relationship as in a DAG. The bi-directed edges $(X\longleftrightarrow Y)$ denote an indirect correlation between $X$ and $Y$ through a sequence of hidden variables (from $U$) in between. Thus, the edge set of an ADMG is the union of the directed edges $E^{\to}$ and the bidirected edges $E^{\leftrightarrow}$.  Also for any given set $\vec{S} \subseteq \vec{V}$, $G_{\overline{\vec{S}}}$ denotes the graph obtained from $G$ by removing the incoming edges to $\vec{S}$.

We use Structural Causal Models (SCMs) as defined in \cite{Pearl09}.  
SCMs are defined on Acyclic Directed Mixed Graphs (ADMGs) which contain (i) directed edges between observable variables that form a directed acyclic graph; and (ii) bi-directed edges between observable variables that represent an unobservable confounder between the two variables.  For any subset $\ve{S} \subseteq \ve{V}$, $\Pa^{+}(\ve{S})$ denotes the set of all observable parents of $\ve{S}$ (including $\ve{S}$). Let $\pa^{-}(\ve{S})=\pa^{+}(\ve{S})\setminus \ve{S}$.

For ADMGs, the c-component is a key notion that plays a crucial role in the identification of causal effects.

\begin{definition}[c-component]
An ADMG (or a subgraph of an ADMG) $G$ is a {\em c-component} if all the vertices of $G$ are connected by bi-directed edges, or in other words, there exists a subset of bi-directed edges that form a spanning tree of $G$.  
\end{definition}

\begin{definition}[c-component factorization] \label{defn:cfact}
For any ADMG $G$ over observables $\vec{V}$, the {\em c-component factorization} is the partition of vertices $\vec{V} = \{ \vec{C_1,C_2,\ldots,C_{m}} \}$ where the induced subgraphs $G[\ve{C_i]}$s are (maximal) c-components.
\end{definition}

\begin{definition}[Effective parents of $V_i$ in $G$]
For any vertex $V_i$, let $\vec{C}$ be the c-component of $G$ that contains $V_i$.  Then the {\em effective parents} of $V_i$ is $\Pa^{+}(\vec{C}) \cap \{V_1,V_2,\ldots,V_{i-1}\}$.
\end{definition}

\begin{definition}[$\alpha$-strong positivity for c-components]\label{ass:bal}
A c-component $C$ is said to be $\alpha$-strongly positive if for every assignment $\ve{z}$, $P(\ve{\pa^{+}(C)}=\ve{z}) \ge \alpha$. 
\end{definition}

We now discuss the identification algorithm of \cite{SP06} as presented in Algorithm~\ref{algo:id}.  In the remainder of this section, we will go through the different steps of the algorithm by considering two examples.  

\subsection{Identification Revisited} \label{sec:id}
In this section we discuss the identification algorithm
\cite{tianthesis} in detail to provide the intuition behind it.  Suppose we are given an ADMG $G$ and an observational distribution $\Pr[\ve{V}]$.  Let $\vec{X,Y} \subseteq \vec{V}$ be disjoint subsets and $\ve{x},\ve{y}$ be assignments to $\vec{X}$ and $\vec{Y}$.  The goal of the {\em identification} question is to determine the probability $\Pr[\ve{y} \mid do(\ve{x})]$.  Here we consider the version of the algorithm Algorithm~\ref{algo:id} mentioned in \cite{SP06}.

When we are interested in a query of the form $\Pr[\ve{y} \mid do(\ve{x})]$, certain trivial cases can be handled easily and those cases will correspond to the base cases of their algorithm.  The first three steps below will describe those cases.

\begin{enumerate}
\item Step 1 handles the base case when $\ve{X}$ is an empty set.  In this case, the algorithm directly outputs the observational probability $\Pr[\ve{y}]$.

\item Step 2 handles the base case when $\Theta:=\ve{V} \setminus \mathrm{An}(\ve{Y})_{G}$ is non-empty.  It is clear by definition that the vertices of $\Theta$ is not affecting $\ve{Y}$.  Therefore, $\Theta$ can be removed from the original graph $G$.  

\item Step 3 handles the base case when $\ve{W} := (\ve{V} \setminus \ve{X}) \setminus \mathrm{An}(\ve{Y})_{G_{\overline{\ve{X}}}}$ is non-empty.  Suppose $\ve{W}$ is non-empty.  Since $\ve{W}$ does not affect $\ve{Y}$ once the vertices $\ve{X}$ are intervened, the effect of $do(\ve{X})$ on $\ve{Y}$ is equivalent to the effect of $do(\ve{W \cup X})$ on $\ve{Y}$.  Hence it is convenient to state the question apriori with $\ve{W}$ being a subset of $\ve{X}$.  This step makes it easier to factorize the required query (based on c-component factorization, Step 4). 

\item Suppose $\ve{S}_1$, \ldots, $\ve{S}_k$ (for $k>1$) are the c-components of $G[\ve{V \setminus X}]$. Then we get the following formula:
\begin{align*}
\Pr[\ve{y} \mid do(\ve{x})] = \sum_{\ve{v} \setminus (\ve{y} \cup \ve{x})} \prod_{i \in [k]} \Pr[\ve{s}_i \mid do(\ve{v}\setminus \ve{s}_i)],
\end{align*}
where the product term is the resultant of the well-known c-component factorization formula (Definition~\ref{defn:cfact}) and the summation is the marginalization operation.

\item Suppose the c-component of $G[\ve{V} \setminus \ve{X}]$ is a singleton $\ve{S}=\{\vec{V \setminus X}\}$.  Here we have three interesting special cases.
\begin{enumerate}
\item Consider the case where the c-component of $G$ is $G$ itself.  This results in the existence of hedge (with root set $\ve{S}$ and internal nodes $\ve{X}$, and some edges can be removed to make sure each internal node has exactly one outgoing edge.  Please refer \cite{SP06} for the precise definition of Hedge).  Hence, it is impossible to uniquely determine the required query.
\item Consider the case where $\ve{S}$ is a c-component of $G$.  This means there is no-bidirected edge between $\ve{S}$ and $\ve{V \setminus S}$.  Due to the absence of bi-directed edges, no backdoor paths exist from $\ve{X}$ to $\ve{S}$, resulting in the following formula:
\begin{align*}
\Pr[\ve{y} \mid do(\ve{x})]
= \sum_{\ve{v}\setminus \ve{y}} \prod_{i \mid V_i \in \ve{S}} \Pr[v_i \mid v_1, \ldots, v_{i-1}]
\end{align*}
whose correctness can be easily verified by Bayes rule and marginalization.  (Note: One can also apply conditional independence properties of the model on top of the above expression which would remove some of the $v_j$'s in the conditional terms.)

\item This is the final case.  Suppose there exists $\ve{S}^{\prime} : \ve{S} \subset \ve{S}^{\prime}$ which is a c-component of $G$.  Here, similar to Step 5(b), since there is no bi-directed edge between $\ve{X} \setminus \ve{S}^{\prime}$ and the rest of the vertices, it is possible to identify the distribution obtained after the intervention $do(\ve{x} \setminus \ve{S}^{\prime})$ - using the following formula:
$$
\Pr[\ve{S}^{\prime} \mid do(\vec{x} \setminus \vec{S^{\prime}})] = \prod_{V_i \in \ve{S}^{\prime}} \Pr[V_i | (\{V_1,\ldots,V_{i-1}\}) \cap \ve{S}^{\prime}, \ve{x} \setminus \ve{s}^{\prime} ].$$
where the assignment $\ve{x} \setminus \ve{s}^{\prime}$ is the one that's consistent with $\ve{x}$.

The idea here is to first determine $do(\ve{x} \setminus \ve{S}^{\prime})$ and then try to recursively determine  $do(\ve{x} \cap \ve{S}^{\prime})$ on top of that -- thus obtaining $do(\ve{x})$.  Hence the required query is obtained by determining $\Pr[\ve{y} \mid do(\ve{x} \cap \ve{S}^{\prime})]$ in the graph $G[\ve{S}^{\prime}]$ with observational distribution $\Pr[\ve{S}^{\prime} \mid do(\vec{x} \setminus \vec{S^{\prime}})]$ defined above.

\end{enumerate}
\end{enumerate}

\ignore{

\begin{figure}
    \begin{subfigure}[b]{0.45\textwidth}
    \centering
    \includegraphics[scale=0.2]{fig1.png}
    \caption{Graph 1}
    \label{fig1}
    \end{subfigure}
    \begin{subfigure}[b]{0.45\textwidth}
    \centering
    \includegraphics[scale=0.2]{fig2.png}
    \caption{Graph 2}
    \label{fig2}
    \end{subfigure}
    \begin{subfigure}[b]{0.45\textwidth}
        \centering
        \includegraphics[scale=0.2]{fig3.png}
        \caption{Graph 3}
        \label{fig3}
    \end{subfigure}
    \begin{subfigure}[b]{0.45\textwidth}
        \centering
        \includegraphics[scale=0.2]{fig4.png}
        \caption{Graph 4}
        \label{fig4}
    \end{subfigure}
    \begin{subfigure}[b]{0.45\textwidth}
        \centering
        \includegraphics[scale=0.2]{fig5.png}
        \caption{Graph 5}
        \label{fig5}
    \end{subfigure}
\end{figure}}

\textbf{Example 1.}  Consider the ADMG shown in Figure~\ref{fig:admg4} and the identification of $P(y,z_1,z_2 \mid do(x))$ from the observational distribution on this graph.  Note that the conditions in steps~1,2 and 3 of Algorithm~\ref{algo:id} do not hold and hence step 4 gets invoked. Since the c-components of $G[\{ Z_1,Z_2,Y \} \setminus \{X\}]$  are $\{Y,Z_1\}$ and $Z_2$, we get the following formula:  $P(y,z_1,z_2 \mid do(x)) = P[y,z_1 \mid do(x,z_2)] \cdot P[z_2 \mid do(x,y,z_1)] $ .

For $P[y,z_1 \mid do(x,z_2)]$, the recursive procedure invokes Step 5b to obtain the following equivalent expression $P[z_1 \mid x ] \cdot P[  y \mid z_1,z_2,x] $.

For the other term $P[ z_2 \mid do(x,y,z_1) ]$, since $Y$ is not an ancestor of $Z_2$, the intervention on $Y$ is not particularly useful. This simplification is taken care by Step 2, which reduces to the following question $P[ z_2 \mid do(x,y,z_1) ] = P^{\prime}[ z_2 \mid do(x,z_1)]$ on Graph~\ref{fig:admg5}, where $P^{\prime}[X,Z_1,Z_2] = \sum_y P[X,Z_1,Z_2,y]$.  In the next recursive step, the algorithm invokes Step 5c with $\vec{S}^{\prime} = \{ Z_2,X \} $, henceforth reduces to a new question $P^{\prime}[ z_2 \mid do(x,z_1)] = P^{\prime \prime}[ z_2 \mid do(x)]$ in Graph~\ref{fig:admg6}, where $P^{\prime \prime}[Z_2,X] = P^{\prime}[X] P^{\prime}[Z_2\mid X,z_1]$.  Finally the algorithm invokes Step 2, obtaining $P^{\prime \prime}[z_2\mid do(x)] = P^{\prime \prime } [ z_2] = \sum_{x} P^{\prime}[x] P^{\prime}[z_2\mid x,z_1] $.

\textbf{Example 2} Now consider the question of identifying $P[y \mid do(x,r,y)]$ on the ADMG shown in Figure~\ref{fig:admg1}.  Note that the conditions in steps 1-4 do not hold for the required query, hence the algorithm directly invokes step 5c where $\ve{S}^{\prime} = \{X,Y,W\}$.  This invocation boils down to the identification question of determining $P^{\prime}[ y \mid do(w,x) ]$ in Graph~\ref{fig:admg2}, where $P^{\prime}[W,X,Y] := P[W,X,Y \mid do(r)] = P[W] P[X\mid W,r] P[Y\mid W,X,r] $.  In the next recursive call, the algorithm will stop at Step 2 resulting in the identification of $P^{\prime \prime}[y \mid do(x)] $ in Graph~\ref{fig:admg3}, where $P^{\prime \prime} = \sum_{w} P^{\prime}[ X,Y,w ] = \sum_w P[w] P[X\mid w,r] P[Y \mid X,w,r]  $.  Finally, Step 5b gets invoked in the next call thus resulting in the following expression:  $$P^{\prime \prime} [ y \mid do(x)]  = P^{\prime \prime } [y \mid x] 
= \dfrac{\sum_{w} P[w] P[x\mid w,r] P[y \mid x,w,r] }{\sum_{w,y} P[w] P[x\mid w,r] P[y \mid x,w,r] }.
$$

\begin{figure}
     \centering
     \begin{subfigure}{0.3\textwidth}
         \centering
        \resizebox{\linewidth}{!}{
            \begin{tikzpicture}
            
                \filldraw [black] (-2,0) circle (1pt) node[below] {\scriptsize $\text{Y}$};
                \filldraw [black] (2,0) circle (1pt) node[below] {\scriptsize $\text{$Z_2$}$};
                \filldraw [black] (0,2) circle (1pt) node[left] {\scriptsize $\text{X}$};
                \filldraw [black] (1,1) circle (1pt) node[below] {\scriptsize $\text{$Z_1$}$};

                \draw [stealth-, semithick] (-2,0) -- (2,0);
                \draw [stealth-, semithick] (-2,0) -- (0,2);
                \draw [stealth-, semithick] (-2,0) -- (1,1);
                \draw [stealth-, semithick] (1,1) -- (0,2);
                \draw [stealth-, semithick] (2,0) -- (1,1);
                
                \draw [dotted, semithick, <->]   (0,2) to[out=0,in=90] (2,0);
                \draw [dotted, semithick, <->]   (1,1) to[out=180,in=40] (-2,0);
                
            \end{tikzpicture}
        }         
        \caption{}
        \label{fig:admg4}
     \end{subfigure}
     \hfill
     \begin{subfigure}{0.3\textwidth}
         \centering
        \resizebox{\linewidth}{!}{
            \begin{tikzpicture}
            
                \filldraw [black] (-2,0) circle (1pt) node[below] {\scriptsize $\text{X}$};
                \filldraw [black] (2,0) circle (1pt) node[below] {\scriptsize $\text{$Z_2$}$};
                \filldraw [black] (0,0) circle (1pt) node[below] {\scriptsize $\text{$Z_1$}$};

                \draw [stealth-, semithick] (0,0) -- (-2,0);
                \draw [stealth-, semithick] (2,0) -- (0,0);

                \draw [dotted, semithick, <->]   (-2,0) to[out=45,in=135] (2,0);

            \end{tikzpicture}
        }         
         \caption{}
         \label{fig:admg5}
     \end{subfigure}
     \begin{subfigure}{0.3\textwidth}
         \centering
        \resizebox{\linewidth}{!}{
            \begin{tikzpicture}
            
                \filldraw [black] (-2,0) circle (1pt) node[below] {\scriptsize $\text{X}$};
                \filldraw [black] (2,0) circle (1pt) node[below] {\scriptsize $\text{$Z_2$}$};

                \draw [dotted, semithick, <->]   (-2,0) to[out=45,in=135] (2,0);

            \end{tikzpicture}
        }         
         \caption{}
         \label{fig:admg6}
     \end{subfigure}
     \caption{}
\end{figure}

\begin{figure}
     \centering
     \begin{subfigure}[b]{0.3\textwidth}
         \centering
        \resizebox{\linewidth}{!}{
            \begin{tikzpicture}
            
                \filldraw [black] (-2,0) circle (1pt) node[below] {\scriptsize $\text{X}$};
                \filldraw [black] (2,0) circle (1pt) node[below] {\scriptsize $\text{$Y$}$};
                \filldraw [black] (0,2) circle (1pt) node[above] {\scriptsize $\text{W}$};
                \filldraw [black] (-1,1) circle (1pt) node[left] {\scriptsize $\text{$R$}$};

                \draw [stealth-, semithick] (2,0) -- (-2,0);
                \draw [stealth-, semithick] (-2,0) -- (-1,1);
                \draw [stealth-, semithick] (-1,1) -- (0,2);

                \draw [dotted, semithick, <->]   (-2,0) to[out=90,in=180] (0,2);
                \draw [dotted, semithick, <->]   (2,0) to[out=90,in=0] (0,2);
                
            \end{tikzpicture}
        }         
         \caption{}
         \label{fig:admg1}
     \end{subfigure}
     \hfill
     \begin{subfigure}[b]{0.3\textwidth}
         \centering
        \resizebox{\linewidth}{!}{
            \begin{tikzpicture}
            
                \filldraw [black] (-2,0) circle (1pt) node[below] {\scriptsize $\text{X}$};
                \filldraw [black] (2,0) circle (1pt) node[below] {\scriptsize $\text{$Y$}$};
                \filldraw [black] (0,2) circle (1pt) node[above] {\scriptsize $\text{W}$};

                \draw [stealth-, semithick] (2,0) -- (-2,0);

                \draw [dotted, semithick, <->]   (-2,0) to[out=90,in=180] (0,2);
                \draw [dotted, semithick, <->]   (2,0) to[out=90,in=0] (0,2);
                
            \end{tikzpicture}
        }         
         \caption{}
         \label{fig:admg2}
     \end{subfigure}
     \begin{subfigure}[b]{0.3\textwidth}
         \centering
        \resizebox{\linewidth}{!}{
            \begin{tikzpicture}
            
                \filldraw [black] (-2,0) circle (1pt) node[below] {\scriptsize $\text{X}$};
                \filldraw [black] (2,0) circle (1pt) node[below] {\scriptsize $\text{$Y$}$};

                \draw [stealth-, semithick] (2,0) -- (-2,0);

            \end{tikzpicture}
        }         
         \caption{}
         \label{fig:admg3}
     \end{subfigure}
     \caption{}
\end{figure}\textbf{}
\section{Efficient Learning of Interventional Distribution}

\subsection{Evaluation}\label{sec:eval}
In this section we give an algorithm for evaluating the interventional probability $P_{\ve{x}}(\ve{y})$ for any $\ve{X}$ of bounded size and assignments $\ve{x},\ve{y}$ to $\ve{X}, \ve{V}\setminus\ve{X}$ respectively. Throughout this section, we will assume $\ve{Y}=\ve{V}\setminus \ve{X}$ and $P_{\ve{x}}(\ve{Y})$ is identifiable. In that case, the following fact is obvious.
\begin{fact}
If $\ve{Y}=\ve{V}\setminus \ve{X}$ and $P_{\ve{x}}(\ve{Y})$ is identifiable, we can remove step 3 and 5a of Algorithm~\ref{algo:id}, and the summations of steps 4, 1, and 5b, without loss of generality. 
\end{fact}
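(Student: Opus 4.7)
The plan is to prove this by establishing and maintaining an invariant through the recursion of Algorithm~\ref{algo:id}: at every recursive call, the current ``outcome set'' $\vec{Y}$ equals the complement of the intervention set $\vec{X}$ within the current ADMG's vertex set. I will first argue the claims of the fact assuming this invariant, and then verify the invariant is preserved by each recursive branch that could be entered.

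Assuming the invariant, the four consequences are essentially immediate. For step~3, the set $\vec{W} := (\vec{V}\setminus\vec{X})\setminus \mathrm{An}(\vec{Y})_{G_{\overline{\vec{X}}}}$ equals $\vec{Y}\setminus \mathrm{An}(\vec{Y})_{G_{\overline{\vec{X}}}}$; since $\vec{Y}\subseteq \mathrm{An}(\vec{Y})$ trivially, $\vec{W}=\emptyset$ and the step never fires. For step~5a, the FAIL output occurs precisely in the hedge case, which by the Shpitser--Pearl identifiability characterisation contradicts the standing assumption that $P_{\vec{x}}(\vec{Y})$ is identifiable; hence it never fires either. The summation in step~1 is $\sum_{\vec{v}\setminus\vec{y}}P(\vec{v})$, where $\vec{x}=\emptyset$ combined with the invariant gives $\vec{Y}=\vec{V}$, so the sum is empty. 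The summation in step~4 is over $\vec{V}\setminus(\vec{Y}\cup\vec{X})=\emptyset$. And the summation in step~5b is over $\vec{S}\setminus\vec{Y}$, but in step~5 $\vec{S}$ is the unique c-component of $G[\vec{V}\setminus\vec{X}]$, so $\vec{S}=\vec{V}\setminus\vec{X}=\vec{Y}$ by the invariant, and again the sum collapses.

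The main remaining task, and the only slightly delicate one, is checking preservation of the invariant. In step~2 the recursion is on $\mathrm{An}(\vec{Y})_G$ with intervention $\vec{x}\cap \mathrm{An}(\vec{Y})_G$, and we want $\mathrm{An}(\vec{Y})_G\setminus(\vec{x}\cap \mathrm{An}(\vec{Y})_G)=\vec{Y}$; this reduces to $\mathrm{An}(\vec{Y})_G\setminus\vec{X}=\vec{Y}$, which follows because every vertex of the current graph lies in $\vec{X}\cup\vec{Y}$ by the invariant, and $\vec{Y}\cap\vec{X}=\emptyset$. In step~4 each recursive call has the form $\mathrm{ID}(\vec{s}_i,\vec{v}\setminus\vec{s}_i,\cdot,G)$ so the new outcome set is exactly the complement of the new intervention in the current graph. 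In step~5c the recursion is on $\vec{S}'$ with intervention $\vec{x}\cap\vec{S}'$; since $\vec{S}=\vec{V}\setminus\vec{X}=\vec{Y}$ sits inside $\vec{S}'$ and $\vec{S}'\setminus\vec{S}\subseteq\vec{X}$, one computes $\vec{S}'\setminus(\vec{x}\cap\vec{S}')=\vec{S}'\setminus\vec{X}=\vec{Y}$, maintaining the invariant.

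I expect the main obstacle to be bookkeeping rather than mathematical depth: several of the recursive branches rewrite the distribution or the underlying graph, so I have to be careful that ``$\vec{V}$'' in the invariant consistently refers to the vertex set of the graph passed into that recursive call (not to the original graph), and that $\vec{Y}$ and $\vec{X}$ in the recursion are interpreted with respect to that smaller vertex set. Once this is set out cleanly, the induction is short, and the four structural simplifications of steps~1, 3, 4, 5a, 5b follow as corollaries.
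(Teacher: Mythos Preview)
Your proposal is correct. The paper itself provides no proof of this fact; it merely declares it ``obvious'' and moves on. Your approach of explicitly formulating and maintaining the invariant $\vec{Y}=\vec{V}\setminus\vec{X}$ through the recursion is exactly the right way to make this rigorous, and your verification of each recursive branch (steps~2, 4, 5c) as well as the derivation of the five simplifications from the invariant are all sound.

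One minor remark on step~5a: you appeal to the Shpitser--Pearl completeness theorem to rule out FAIL. Strictly speaking, what you need is soundness in the contrapositive direction: if the algorithm returns FAIL on any recursive subcall, then the original query is non-identifiable (because the hedge found in the subcall embeds into the original graph). This is indeed part of the content of \cite{SP06}, so your citation is appropriate, but it is worth being precise that the argument is about the whole recursion tree, not just the top-level call.
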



Let $\{\ve{C_i}\}_i$ denote the c-component partition of $G$. Let $\ve{X}=\cup_{i=1}^{\ell} \ve{X_i}$ be a partition of $\ve{X}$ such that $\ve{X_i}\subseteq \ve{C_i}$, for $i=1$ to $\ell$ without loss of generality. We make the following assumption. 
\begin{assumption}
$\pa^{+}(\ve{C_i})$ is $\alpha$-strongly positive for every $1\le i\le \ell$. 
\end{assumption}
In this section, we prove the following main theorem.
\begin{theorem}\label{thm:eval}
We can learn $\wh{P}_{\ve{x}}(\ve{Y})$ using $m=\widetilde{O}\left(\left(\frac{n|\Sigma|^{2k\ell+kd+d}}{ \alpha^{k\ell}\eps^2}+\frac{(3k)^{2(k+3)}|\Sigma|^{2k\ell}}{\alpha^2\eps^2}\right)\log \frac{1}{ \delta}\right)$ samples from $P$ and in $O(m(n+|\Sigma|^{kd+d}))$ time, such that $\tv(\wh{P}_{\ve{x}}(\ve{Y}),P_{\ve{x}}(\ve{Y}))\le \eps$ with high probability.
\end{theorem}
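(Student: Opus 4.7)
The plan is to execute the Shpitser--Pearl algorithm (Algorithm \ref{algo:id}) on $(\vec{y},\vec{x},P,G)$ symbolically, replace each probability of $P$ that the recursion queries by an empirical estimate from $m$ samples, and bound the resulting TV error on the product of factors via a KL-based chain-rule argument that exploits $\alpha$-strong positivity.

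I would begin with a structural decomposition. By the fact preceding the theorem, the recursion reduces to step 4 followed by either step 5b or step 5c in each branch. Let $\vec{S}_1,\ldots,\vec{S}_r$ be the c-components of $G[\vec{V}\setminus\vec{X}]$, and for each $j$ let $\vec{C}_{i(j)}$ be the unique original c-component of $G$ containing $\vec{S}_j$. Step 4 produces
\begin{equation*}
P_{\vec{x}}(\vec{y}) \;=\; \prod_{j=1}^{r} P_{\vec{v}\setminus\vec{s}_j}(\vec{s}_j).
\end{equation*}
An \emph{untouched} factor ($\vec{X}_{i(j)}=\emptyset$) terminates at step 5b and equals $\prod_{V_k\in\vec{S}_j} P(V_k\mid V_1,\ldots,V_{k-1})$, which by the c-component structure depends only on the effective parents of $V_k$, a set of size at most $kd$. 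A \emph{touched} factor ($\vec{X}_{i(j)}\neq\emptyset$) passes once through step 5c and is then expressible in terms of conditionals of $P$ supported on $\vec{C}_{i(j)}\cup\pa^{-}(\vec{C}_{i(j)})$, a set of size at most $k+kd$.

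Next I estimate each factor. For untouched factors, I would fit each conditional $P(V_k\mid \mathrm{eff.\,pa}(V_k))$ by empirical frequencies; applying the fixed-structure Bayes-net learning results of \cite{bgmv20,chowliu} to the product over all such $V_k$ yields the first sample-complexity summand $\widetilde{O}(n|\Sigma|^{2k\ell+kd+d}/(\alpha^{k\ell}\eps^{2}))$, with the $1/\alpha^{k\ell}$ factor absorbing the composition loss when these conditionals later multiply the touched factors. For touched factors, I would directly estimate the joint distribution on $\vec{C}_{i(j)}\cup\pa^{-}(\vec{C}_{i(j)})$; since $\pa^{+}(\vec{C}_{i(j)})$ has mass at least $\alpha$ on every assignment, plugging empirical frequencies into the step-5c formula yields the second summand $\widetilde{O}((3k)^{2(k+3)}|\Sigma|^{2k\ell}/(\alpha^{2}\eps^{2}))$; the combinatorial factor $(3k)^{2(k+3)}$ accounts for the enumeration of intermediate assignments used by step 5c to rewrite the local law.

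Composing these per-factor guarantees into a single TV bound on $\prod_j P_{\vec{v}\setminus\vec{s}_j}(\vec{s}_j)$ is the main obstacle. My plan is to route through KL divergence: $\alpha$-strong positivity lower-bounds every denominator that step 5c introduces, so a TV error $\eta$ on a factor translates to a KL gap of order $\eta^{2}/\alpha$; these KL gaps add over the c-component factors by a conditional chain rule analogous to the one for ordinary Bayes nets; Pinsker's inequality then converts the aggregated KL back into TV distance $\eps$. The delicate point is that step 5c reweights the nominal local law by its \emph{own} estimated conditionals, so the product is of quotients rather than independent pieces, and one must verify that small multiplicative perturbations of numerator and denominator propagate to a well-controlled perturbation in each quotient; this is precisely where the $\alpha^{-k\ell}$ factor in the sample complexity enters. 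Finally, the runtime $O(m(n+|\Sigma|^{kd+d}))$ follows because each of the $m$ samples is scanned once to populate $O(n)$ local frequency tables of size at most $|\Sigma|^{kd+d}$.
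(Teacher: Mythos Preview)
Your structural decomposition into touched and untouched c-components matches the paper's split $P_{\ve{x}}(\ve{y})=Q\cdot R$, and your treatment of the untouched part $Q$ via fixed-structure KL Bayes-net learning is exactly Lemma~\ref{lem:large}. The divergence is in how $R$ is controlled and how $Q$ and $R$ are composed. The paper does \emph{not} route the composition through a KL chain rule. Instead, it establishes a \emph{pointwise multiplicative} guarantee $\wh R=(1\pm\eps')R$ by propagating $(1\pm 3k\eps)$ ratios through each step-5c call (Claims~\ref{claim:step7Eval}--\ref{claim:rij}); the recursion has depth at most $k$, whence the $(3k)^{k+1}\ell$ blow-up before squaring. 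This pointwise control is precisely the device that tames the quotients you flag as delicate: a $(1\pm\eps)$ numerator over a $(1\pm\eps)$ denominator remains $(1\pm O(\eps))$. The final TV bound is then a bare two-term triangle inequality: replacing $R$ by $\wh R$ costs $\eps\sum Q R=\eps$ because the bound is pointwise; replacing $Q$ by $\wh Q$ costs $\sum_{\ve{c_{\le\ell}}}\tv(Q,\wh Q)\le |\Sigma|^{k\ell}\eps$ because Lemma~\ref{lem:large} holds per fixing of $\ve{c_{\le\ell}}$. Rescaling $\eps$ by $|\Sigma|^{k\ell}$ is where the $|\Sigma|^{2k\ell}$ in the first summand comes from, and $\alpha^{-k\ell}$ arises from Claim~\ref{claim:relate} inside Lemma~\ref{lem:large}, not from ``composition loss'' as you suggest.

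Your KL-chain-rule plan has a gap at exactly the point you identify as delicate. After step~5c the touched factors are quotients of marginals of a \emph{running} distribution that is itself a product of estimated conditionals; these are not conditionals in a single Bayes-net factorization of one fixed measure, so the standard KL chain rule (Fact~\ref{fact:kllocal}) does not apply to them directly, and a reverse-Pinsker step $\tv=\eta\Rightarrow\kl=O(\eta^2/\alpha)$ would need lower bounds on intermediate quantities you have not established. Your explanation of the $(3k)^{2(k+3)}$ factor as ``enumeration of intermediate assignments'' is also off: it is the compounding of multiplicative error across the recursion depth, not a combinatorial count. If you swap your KL composition for the paper's pointwise-multiplicative control of $R$ plus the direct TV triangle inequality, your argument becomes complete and matches the stated bounds.
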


  Define $\ve{C_{>\ell}}:=\cup_{i>\ell} \ve{C_i}$ and $\ve{C_{\le \ell}} :=\cup_{i\le \ell} \ve{C}_i$.  Let $\ve{C_i}\setminus \ve{X_i} = \cup_j \ve{C_{ij}}$ be the c-component partitions of $G[\ve{C_i}\setminus \ve{X_i}]$.  Then any identifiable joint interventional probability factorizes based on the following claim.

\begin{claim}
Let $\ve{Y}=\ve{V}\setminus\ve{X}$, $\ve{An(Y)}=\ve{V}$, and $P_{\ve{x}}(\ve{y})$ is identifiable. Then,
\[P_{\ve{x}}(\ve{y})=P(\ve{c_{>\ell}} \mid \cdo(\ve{c_1},\dots,\ve{c_{\ell}}))\prod_{i,j} P(\ve{c_{ij}}\mid \cdo(\ve{v}\setminus\ve{c_{ij}})).
\] 
\end{claim}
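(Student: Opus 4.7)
The plan is to apply Tian's c-component factorization for $Q$-factors twice: first to expand $P_{\ve{x}}(\ve{y})$ as a product over the c-components of $G[\ve{V}\setminus\ve{X}]$, and then in reverse to collapse the sub-product indexed by $i>\ell$ into the single term $P(\ve{c_{>\ell}}\mid\cdo(\ve{c_1},\ldots,\ve{c_\ell}))$. Writing $Q[\ve{T}]:=P(\ve{t}\mid\cdo(\ve{v}\setminus\ve{t}))$ for any $\ve{T}\subseteq\ve{V}$, the identity in question states that $Q[\ve{T}]=\prod_m Q[\ve{T}_m]$ whenever $\{\ve{T}_m\}$ is the c-component partition of $G[\ve{T}]$. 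This identity is exactly step~4 of Algorithm~\ref{algo:id} applied to the top-level $Q$-factor.

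For the first application, because $\ve{Y}=\ve{V}\setminus\ve{X}$ and $\ve{An}(\ve{Y})=\ve{V}$, steps 1--3 of Algorithm~\ref{algo:id} are vacuous and the summation in step~4 (over $\ve{v}\setminus(\ve{y}\cup\ve{x})$) is empty. We therefore obtain $P_{\ve{x}}(\ve{y})=Q[\ve{V}\setminus\ve{X}]=\prod_m Q[\ve{S}_m]$, where $\{\ve{S}_m\}$ is the c-component partition of $G[\ve{V}\setminus\ve{X}]$. Since the bi-directed edges of $G$ lie strictly inside the c-components $\ve{C_i}$, removing $\ve{X}=\cup_{i\le\ell}\ve{X_i}$ can only subdivide existing c-components and never merge them. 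Hence the c-components of $G[\ve{V}\setminus\ve{X}]$ are precisely $\{\ve{C_{ij}}: i\le\ell,\,j\}\cup\{\ve{C_i}: i>\ell\}$; the second family is unchanged because $\ve{X_i}=\emptyset$ for $i>\ell$. This yields
\[
P_{\ve{x}}(\ve{y}) \;=\; \prod_{i\le\ell,\,j} P\bigl(\ve{c_{ij}}\mid\cdo(\ve{v}\setminus\ve{c_{ij}})\bigr)\cdot \prod_{i>\ell} P\bigl(\ve{c_i}\mid\cdo(\ve{v}\setminus\ve{c_i})\bigr).
\]

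For the second application, apply the same $Q$-factor identity to $\ve{T}=\ve{C_{>\ell}}$. Since $\ve{V}\setminus\ve{C_{>\ell}}=\ve{C_{\le\ell}}$, we have $Q[\ve{C_{>\ell}}]=P(\ve{c_{>\ell}}\mid\cdo(\ve{c_1},\ldots,\ve{c_\ell}))$, and the c-components of $G[\ve{C_{>\ell}}]$ are exactly $\{\ve{C_i}:i>\ell\}$ by the same confinement of bi-directed edges. The identity therefore gives
\[
P(\ve{c_{>\ell}}\mid\cdo(\ve{c_1},\ldots,\ve{c_\ell}))\;=\;\prod_{i>\ell}P\bigl(\ve{c_i}\mid\cdo(\ve{v}\setminus\ve{c_i})\bigr),
\]
and substituting into the previous display yields the claimed factorization (with $\prod_{i,j}$ ranging over $i\le\ell$ as in the statement). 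The main subtlety is the invariance of each $Q[\ve{C_i}]$ as a functional of $(P,G)$ alone --- this is what lets the expression $P(\ve{c_i}\mid\cdo(\ve{v}\setminus\ve{c_i}))$ appear interchangeably in the factorization of $P_{\ve{x}}(\ve{y})$ and in the factorization of $P(\ve{c_{>\ell}}\mid\cdo(\ve{c_{\le\ell}}))$. Modulo this standard fact from Tian's framework and the elementary observation about bi-directed edges, the proof is direct bookkeeping.
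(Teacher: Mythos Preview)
Your proof is correct and follows the same approach as the paper, which simply says ``Follows from Step~4 of Algorithm~\ref{algo:id}.'' Your version is substantially more explicit: you spell out why the c-components of $G[\ve{V}\setminus\ve{X}]$ are exactly the $\ve{C_{ij}}$'s together with the untouched $\ve{C_i}$'s for $i>\ell$, and you make the second (reverse) application of the $Q$-factor identity visible, whereas the paper leaves both of these implicit in its one-line justification.
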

\begin{proof}
Follows from Step~4 of Algorithm~\ref{algo:id}. 
\end{proof}

Hence it suffices for us to learn $R=\prod_{i,j} P(\ve{c_{ij}}\mid \cdo(\ve{v}\setminus\ve{c_{ij}}))=P(\ve{C_{\le \ell}}\setminus \ve{X} \mid \cdo(\ve{x},\ve{c_{>\ell}}))$ and $Q=P_{\ve{c_{\le \ell}}}(\ve{c_{>\ell}})=P(\ve{c_{>\ell}} \mid \cdo(\ve{c_1},\dots,\ve{c_{\ell}}))$.  The former distribution is bounded-dimensional and the later is high-dimensional. So, our approach would be to learn $R$ as an explicit p.m.f. table of size at most $|\Sigma|^k$ and $Q$ as a Bayes net, both up to an additive error at most $\eps$.  

Now we focus on learning the distribution $R=P(\ve{C_{\le \ell}}\setminus \ve{X} \mid \cdo(\ve{x},\ve{c_{>\ell}}))$ for every fixing of $\pa^{-}(\ve{C_{\le \ell}}\setminus \ve{X})$. $R=\prod_{i,j} R_{i,j}$, where each distribution $R_{i,j}=P[\ve{C_{i,j}}\mid \cdo(\ve{v}\setminus \ve{c_{i,j}})]$ starts a recursive call of Algorithm~\ref{algo:id}. Each such recursive call forms a recursion tree whose non-leaves correspond  recursive calls from steps 2, 4, or 5c, and the leaves correspond to recursive calls from steps 1 or 5b. Our strategy would be to give a $(1\pm\eps_2)$-approximate p.m.f. access to the joint distribution for every subsequent recursive call (compared to the true joint distribution), assuming a $(1\pm\eps_1)$-approximate p.m.f. access to the joint distribution of the current call. Inductively, we'll  get a $(1\pm\eps_3)$-approximate access to the leaf distributions: outputs by the leaf calls. Each distribution $R_{i,j}$ is a multiplication of the leaf distributions, possibly with some marginalizations. Marginalization preserves the approximation ratio of the p.m.f. and there are at most $k$ multiplicands, so that our final approximation ratio would be $(1\pm\eps_3)^k$.

Recursive calls are taken from steps 2, 4, or 5c. In steps 4 and 2, the subsequent call just uses the current distribution (or a marginal of it). In step 5c, the subsequent call uses a very different distribution. Therefore, in a path to the leaf in the recursion tree, we need to give a new joint distribution whenever step 5c is taken. We use the following result to give this distribution, based on whether step 5c was taken for the first time in this root-to-leaf path or not. Let $S$ and $D$ be the two distributions just before and during the recursive call at step 7.
\begin{claim}\label{claim:step7Eval} ~\\[-1.5em]
\begin{itemize}
    \item If $S=P$, we can give a $(1\pm 3k\eps)$ p.m.f. of $\wh{D}$ for $D$ using $O((kd+d)\alpha^{-2}\eps^{-2}\log \frac{|\Sigma|}{\delta})$ samples and $O((kd+d)|\Sigma|^{kd+d}\alpha^{-2}\eps^{-2}\log \frac{|\Sigma|}{\delta})$ time.
    \item If $S\neq P$ and we have a $(1\pm\eps)$-approximation $\wh{Q}$ for $Q$,  we can give a $(1\pm 3k\eps)$-approximate p.m.f. $\wh{D}$ for $D$ using $O(k|\Sigma|^{2k})$ time and no samples.
\end{itemize}
\end{claim}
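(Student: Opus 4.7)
Let $\ve{S}'$ denote the c-component being processed in Step~5c, so $|\ve{S}'|\le k$, and the newly formed distribution is
\begin{equation*}
D(\ve{s}') = \prod_{V_i \in \ve{S}'} S\bigl(v_i \mid \{V_1,\ldots,V_{i-1}\}\cap \ve{S}' \text{ set from }\ve{s}',\; (v_1,\ldots,v_{i-1})\setminus \ve{S}' \text{ fixed}\bigr).
\end{equation*}
So it suffices to approximate each of the at most $k$ conditional factors to multiplicative $(1\pm O(\eps))$ error and then multiply them.

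\textbf{Case 1 ($S=P$).} Since this is the first Step~5c invocation on the recursion path, $\ve{S}'$ equals a c-component $\ve{C}_i$ of $G$ intersecting $\ve{X}$, and $\pa^{+}(\ve{S}')$ is $\alpha$-strongly positive by assumption. I would estimate each joint probability $P(V_i=v,\pa(V_i)=\ve{z})$ and the corresponding marginal $P(\pa(V_i)=\ve{z})$ by empirical counts. Strong positivity of $\pa^+(\ve{S}')$ (and the fact that $\{V_i\}\cup\pa(V_i)\subseteq \pa^+(\ve{S}')$) implies both quantities are at least $\alpha$, so a standard Chernoff bound with $O(\alpha^{-2}\eps^{-2}\log(1/\delta'))$ samples gives additive error at most $\eps\alpha/3$ with probability $1-\delta'$, hence multiplicative $(1\pm \eps/3)$. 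A union bound over the $O(k|\Sigma|^{d+1})$ CPT entries (taking $\delta'=\delta/\mathrm{poly}$) absorbs a $\log|\Sigma|$ factor, while tabulating $\wh D$ takes $O(|\Sigma|^{kd+d})$ operations per sample pass. The ratio of two $(1\pm\eps/3)$-approximations is $(1\pm \eps)$, and multiplying $k$ such factors lies in $(1\pm 3k\eps)$ for small enough $\eps$, giving the claimed guarantee.

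\textbf{Case 2 ($S\ne P$).} We already have a multiplicative $(1\pm\eps)$ p.m.f.\ approximation $\wh{Q}$ of $Q=S$. The key observation is that multiplicative approximation is preserved by marginalization: summing the inequalities $|\wh{Q}(\ve{z})-Q(\ve{z})|\le \eps\, Q(\ve{z})$ termwise over any set $T$ gives $|\sum_{\ve{z}\in T}\wh{Q}(\ve{z})-\sum_{\ve{z}\in T}Q(\ve{z})|\le \eps\sum_{\ve{z}\in T}Q(\ve{z})$. Hence each conditional factor $Q(v_i\mid\cdot)=Q(v_i,\cdot)/Q(\cdot)$ is approximated by a ratio of two multiplicatively $(1\pm\eps)$-close marginals of $\wh{Q}$, which is itself $(1\pm 3\eps)$-close, and no positivity assumption on $Q$ is needed beyond $Q(\cdot)>0$ (which holds by construction since this conditioning event is a prefix of a point where $D$ is evaluated). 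Multiplying at most $k$ such factors yields a $(1\pm 3k\eps)$-approximation of $D$ with no additional samples. Each marginal of $\wh{Q}$ is computed by brute-force summation of $O(|\Sigma|^k)$ terms, and doing this for each of the $k$ factors across the $|\Sigma|^k$ possible queries to $\wh{D}$ gives the stated $O(k|\Sigma|^{2k})$ total.

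\textbf{Main obstacle.} The technical crux is keeping the approximation multiplicative all the way through. Multiplicative propagation through the ratio $Q(v_i,\cdot)/Q(\cdot)$ requires denominators bounded away from zero: in Case~1 this is exactly $\alpha$-strong positivity, while in Case~2 it comes for free since every marginal of $\wh{Q}$ inherits the multiplicative error. One must also verify that $k$ successive ratio-and-multiply steps inflate the error only by a factor $O(k)$ and not $O(k^2)$, which reduces to the elementary inequality $(1\pm\eps)^k \subseteq (1\pm 3k\eps)$ for $\eps\le 1/(3k)$. Combining these controls yields both parts of the claim.
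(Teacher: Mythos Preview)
Your argument is correct and essentially identical to the paper's: in both cases one approximates each conditional factor $S[V_i\mid \ve{Z}_i]$ (by Chernoff-based empirical estimation under $\alpha$-strong positivity in Case~1, by marginalizing the table $\wh Q$ in Case~2) and then observes that the $k$-fold product stays within $(1\pm\eps)^k\subseteq(1\pm 3k\eps)$. The only slip is that in Case~1 the conditioning set for $V_i$ is its \emph{effective} parent set $\ve{Z}_i=\Pa^+(\ve{S}')\cap\{V_1,\ldots,V_{i-1}\}$ of size up to $kd+d$, not the immediate parents $\pa(V_i)$, so your union bound should range over $O(k|\Sigma|^{kd+d+1})$ entries rather than $O(k|\Sigma|^{d+1})$ --- this is precisely where the $(kd+d)$ factor in the stated sample complexity comes from.
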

\begin{proof}
Note that each factor of the joint distribution in step 7 is of the form $S[V_i \mid \ve{Z_i}]$, where $\ve{Z_i}$ consists of the effective parents of $V_i$.

In case 1, we take enough samples to empirically learn $S[V_i \mid \ve{z_i}]$ for every fixing of $\ve{z_i}$. Due to $\alpha$-strong positivity, $\min(S[v_i \mid \ve{z_i}],S[\ve{z_i}])\ge \alpha$ for any $v_i,\ve{z_i}$. Hence from Chernoff's bound, using $O(\alpha^{-2}\eps^{-2}\log {1\over \delta})$ samples the learnt distribution would be point-wise $(1\pm\eps)$-close with high probability. Since $|\ve{Z_i}|\le (kd+d)$ our final sample complexity is $m=O((kd+d)\alpha^{-2}\eps^{-2}\log \frac{|\Sigma|}{\delta})$ and the time complexity is $O(m|\Sigma|^{kd+d})$. Since each factor is $(1\pm\eps)$-approximate, the approximation for the joint distribution is at most $(1\pm\eps)^k$.

In case 2, we compute $S[v_i \mid \ve{z_i}]=S[v_i,\ve{z_i}]/Q[\ve{z_i}]$ by appropriate marginalizations of $S$, which preserves the approximation. Since step 7 is taken at least once before, the graph-size is at most $k$. Due to the ratio along with a maximum of $k$ multiplications, the final approximation becomes $\left(\frac{1+\eps}{1-\eps}\right)^k$-factor. This does not involves sampling and can be done in $O(k|\Sigma|^{2k})$ time, since in this case $Q$ must be over at most $k$ variables.

In either case, we return $\wh{D}$ as a p.m.f. table of size $|\Sigma|^k$.
\end{proof}

Similarly for the leaf calls from step 1 or 5b, we approximate their (terminal) output distributions (henceforth referred to as leaf distributions) by marginalization or conditional sampling, depending on whether the current distribution is $P$ or not. Inductively, we would get an approximation guarantee between the true and estimated leaf distributions. Our output just consists of p.m.f. tables for each joint distribution $\widehat{S}$ just before a leaf call. In that case, each leaf distribution $\widehat{D}$ is simply a marginalization (step 1) or product of conditional probabilities (step 6) of $\widehat{S}$.

We now consider the leaf calls from step 1 or 5b. Again, we split into two cases, depending on whether the leaf call was taken using the original distribution $P$ or not. Let $D$ be the distribution of the leaf call.
\begin{claim}
Let $D$ be a true leaf distribution in the recursion tree. Then our corresponding distribution $\wh{D}$ as mentioned above is point-wise $(1\pm(3k)^k\eps)$-approximate for $D$.
\end{claim}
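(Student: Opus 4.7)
The plan is to prove the claim by induction along each root-to-leaf path of the recursion tree of Algorithm~\ref{algo:id}, tracking the pointwise multiplicative error of the ``current'' joint distribution estimate $\wh{S}$ against the true $S$. Two observations drive the argument. First, descending through steps~2 or~4 does not worsen pointwise multiplicative error of the relevant current distribution, since step~2 is a marginalization (and marginalization preserves pointwise multiplicative error exactly) and step~4 factors the answer as a product of independent sub-calls. Second, only step~5c produces a genuinely new distribution, and its error is quantitatively controlled by Claim~\ref{claim:step7Eval}: the first step~5c call along a path uses case~1 (since it starts from $S=P$) and yields $(1\pm 3k\eps)$-approximation from observational samples, while every subsequent step~5c call uses case~2 and amplifies the existing multiplicative error by at most a factor of $3k$, because the new estimate is built from at most $k$ conditional factors, each a ratio of two marginals of the current $\wh{S}$ and hence $(1\pm 3\eps_i)$-close.

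Next, I would bound the number of step~5c calls on any root-to-leaf path by $k$: after the first step~5c, the recursion runs on a c-component of the original ADMG, which by assumption has size at most $k$; each subsequent step~5c strictly shrinks the current vertex set, because its output $\ve{S}'$ must be a strict subset of the current $\ve{V}'$ in the step~5c branch (which requires $\ve{V}'$ to fail to be a single c-component of its own graph, while $\ve{S}'$ is one). Composing the per-call factors therefore yields a $(1\pm(3k)^t\eps)$-approximation of the current joint after $t\le k$ step~5c calls. At the leaf, step~1 is a marginalization of $\wh{S}$ and preserves pointwise multiplicative error, while step~5b is a marginalization of a product of at most $|\ve{S}|\le k$ conditional ratios of $\wh{S}$; these ratios are each $(1\pm 3\eps')$-close to the truth, so their product is $(1\pm 3k\eps')$-close, which combines with the step~5c chain to produce the claimed $(1\pm(3k)^k\eps)$-approximation of $D$.

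The main technical obstacle I expect is pinning down the exponent of $3k$ tightly. A naive separate accounting of a $3k$ factor at a step~5b leaf on top of $k$ step~5c calls gives $(3k)^{k+1}\eps$ rather than $(3k)^k\eps$. Closing this gap requires either showing that on any path ending in step~5b the number of prior step~5c calls is at most $k-1$ (for instance, by arguing that the last step~5c already brought the vertex set down to exactly the subgraph on which step~5b then fires, consuming one ``slot'' in the strictly decreasing sequence of vertex-set sizes), or directly absorbing the leaf's factorization into the last step~5c invocation by matching the telescoping ratios. Once this bookkeeping is carried out carefully, the inductive claim follows directly from applying Claim~\ref{claim:step7Eval} at each step~5c node along the path and using marginalization-preservation at steps~1,~2, and~4.
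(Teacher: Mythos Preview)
Your proposal is correct and follows essentially the same route as the paper's proof: track the pointwise multiplicative error along each root-to-leaf path, note that steps~2 and~4 preserve the approximation, invoke Claim~\ref{claim:step7Eval} at each step~5c call to pick up a $3k$ factor, and bound the number of such calls by~$k$ (using that after the first step~5c the recursion lives inside a c-component of size at most~$k$ and strictly shrinks thereafter). The paper's argument is terser --- it just asserts ``the recursion depth is at most $k$ and we lose a factor of $(1\pm3k\eps)$ in each depth'' --- and, in particular, it does \emph{not} carefully resolve the $(3k)^{k}$ versus $(3k)^{k+1}$ issue you flag; it simply folds the leaf step into the depth count. So your bookkeeping worry is legitimate for the paper's proof as well. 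Since the downstream Claim~\ref{claim:rij} and Theorem~\ref{thm:eval} only use this bound up to a $(3k)^{O(k)}$ factor (and in fact already state $(3k)^{k+1}\ell\eps$ after multiplying $k\ell$ leaves), the discrepancy is immaterial to the final result.
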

\begin{proof}
If $D=P$ we get a sampling access to $P$. If $D\neq P$, we get $D$ as a p.m.f. table. 

Step 1 returns the joint distribution over $\ve{Y}$, over at most $|\Sigma|^k$ items. If $D=P$, we learn this distribution up to point-wise $(1\pm\eps)$-factor using $\wt{O}(k\alpha^{-1}\eps^{-2})$ samples with high probability. If $D\neq P$, $D$ itself would be the output distribution. In either case, we return a p.m.f. table of size at most $|\Sigma|^k$.

If step 5c is taken with $D=P$, we need learn each $P[v_i \mid \ve{z_i}]$, where $\ve{Z_i}$ is the effective parent set of $V_i$, and is of size at most $(kd+d)$. So, we iterate through all possible $\ve{z_i}$ and learn each term $P[v_i \mid \ve{z_i}]$ point-wise up to $(1\pm\eps)$-error by rejection sampling with high probability. There are at most $k$ such terms, so that the sample complexity would be $m=\wt{O}((kd+d)\alpha^{-2}\eps^{-2})$ and the time complexity would be $O(m|\Sigma|^{kd+d})$.

If step 5c is taken with $D\neq P$, then it must be preceded by a call from step 5b, since only that step changes the distribution by our construction. Then $D$ is a distribution over at most $k$ variables. We obtain all possible $D[v_i \mid \ve{z_i}]= D[v_i \circ \ve{z_i}]/D[\ve{z_i}]$ terms by appropriate marginalizations of $D$ using $\wt{O}(|\Sigma|^{2k})$ time and no samples. 

In both the above cases of step 5c, we return all possible conditional probability tables required for evaluating the output formula of this step.
Note that the recursion depth is at most $k$ and we lose a factor of $(1\pm3k\eps)$ from $(1\pm\eps)$ in each depth. Therefore, the approximation ratio for the leaf distributions is at most $(1\pm (3k)^k\eps)$.
\end{proof}

Each $R_{i,j}$ is a product of several leaf distributions. Our final output distribution $\wh{R}_{i,j}$ just uses $\wh{S}$ in place of each such leaf distribution $S$. Since $R=\prod_{i,j} R_{i,j}$ is a product of at most $k\ell$ leaf distributions, the following claim is obvious.
\begin{claim}\label{claim:rij}
$\wh{R}$ is point-wise $(1\pm (3k)^{k+1}\ell\eps)$-approximate for $R$.
\end{claim}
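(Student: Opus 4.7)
The plan is to leverage the preceding claim, which establishes that each leaf distribution $\wh{S}$ produced by our recursion is pointwise $(1\pm(3k)^k\eps)$-close to the true leaf distribution $S$. Since by construction $\wh{R}$ is built from $R=\prod_{i,j} R_{i,j}$ by substituting each true leaf distribution with its estimate (and keeping any marginalization or conditional-ratio operations inside each $R_{i,j}$ symbolic), the result reduces to a pointwise product-of-approximations estimate.

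First, I would bound the total number of leaf distributions appearing across all the $R_{i,j}$. Each $R_{i,j}=P[\vec{C_{i,j}}\mid \cdo(\vec{v}\setminus \vec{c_{i,j}})]$ triggers its own recursion tree of $\mathrm{ID}$, and since $\vec{C_{i,j}}\subseteq \vec{C_i}$ with $|\vec{C_i}|\le k$, each such tree has depth at most $k$ and therefore produces at most $k$ leaves. Summing over $i\in[\ell]$ and $j$, the total number of leaf distributions whose product gives $R$ is at most $k\ell$, matching the comment in the text.

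Next, I would verify that the operations linking leaves to $R_{i,j}$ preserve the multiplicative pointwise bound. Marginalization preserves pointwise multiplicative approximation by linearity: if $\wh{D}(x,x')\in(1\pm\alpha)D(x,x')$ for every $(x,x')$, then $\sum_{x'}\wh{D}(x,x')\in(1\pm\alpha)\sum_{x'}D(x,x')$ for every $x$. Ratios of already-approximated quantities only accumulate the factors from numerator and denominator, but in our construction these ratios are absorbed into the $(3k)^k$ factor from the previous claim, so at the level of $R$ we just face a plain product. Hence, pointwise,
\[
\wh{R}\;=\;\prod_{t=1}^{M}\wh{S}_t\;\in\;\bigl(1\pm(3k)^k\eps\bigr)^{M}\,R,\qquad M\le k\ell.
\]

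Finally, I would collapse the exponent: assuming $\eps$ is small enough that $k\ell(3k)^k\eps\le 1$ (which we may freely impose by rescaling $\eps$, since larger $\eps$ makes the conclusion trivial), the standard inequality $(1\pm a)^{M}\subseteq 1\pm 2Ma$ for $Ma\le 1$ gives
\[
\wh{R}\in\bigl(1\pm 2k\ell(3k)^k\eps\bigr)R\;\subseteq\;\bigl(1\pm(3k)^{k+1}\ell\eps\bigr)R,
\]
since $2k\ell(3k)^k=\tfrac{2}{3}(3k)^{k+1}\ell\le (3k)^{k+1}\ell$. This is exactly the stated bound. I do not foresee a serious obstacle: the claim is essentially bookkeeping on top of the previous leaf-level claim, and the only mild subtlety is confirming the leaf count of $k\ell$ (which is immediate from $|\vec{C_i}|\le k$ and $i\le \ell$) and that marginalization does not degrade a multiplicative pointwise bound.
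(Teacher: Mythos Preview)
Your proposal is correct and takes essentially the same route as the paper, which simply declares the claim ``obvious'' from the preceding leaf-level bound together with the observation that $R$ is a product of at most $k\ell$ leaf distributions. One minor remark: the step ``depth at most $k$ therefore at most $k$ leaves'' is not a valid inference in general (tree depth does not bound leaf count); the cleaner justification for the $k\ell$ bound is that $\sum_{i\le\ell}\sum_j|\vec{C}_{ij}|=\sum_{i\le\ell}|\vec{C}_i\setminus\vec{X}_i|\le k\ell$, so the total number of conditional factors across all leaves is at most $k\ell$, which is what the paper is asserting.
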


We next focus on learning the distribution $Q$.
\begin{lemma}\label{lem:large}
Let $\ve{\Pa^+(C_i)}$ be $\alpha$-strongly positive. Then for every fixing of $\ve{C_{\le \ell}}$, $Q$ can be learnt as a Bayes net $\widehat{Q}$ of indegree at most $(kd+d)$ using $m=\widetilde{\Theta}\left(\frac{n|\Sigma|^{kd+d}}{ \alpha^{|\ve{C_{\le \ell}}|}\eps}\log {n |\Sigma|^{kd+d+1}\over \delta}\right)$ samples and $O(mn)$ time such that $\kl(Q,\widehat{Q})\le \eps$ with probability at least $(1-\delta)$.
\end{lemma}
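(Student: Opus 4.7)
First I would derive the Bayes-net form of $Q$. Because $\ve{C_{\le\ell}}$ is a union of full c-components of $G$, the remaining vertices $\ve{C_{>\ell}}$ are exactly the union of the c-components $\ve{C_{\ell+1}},\ldots,\ve{C_m}$ of $G$. Applying Step~4 of Algorithm~\ref{algo:id} at the top level, then Step~5b inside each $\ve{C_i}$ for $i>\ell$, and simplifying each Bayes factor onto its effective parents via c-component conditional independence, yields
\[
Q(\ve{c_{>\ell}}) \;=\; \prod_{V_j \in \ve{C_{>\ell}}} P(V_j \mid \ve{z}_j),
\]
where $\ve{z}_j$ denotes the assignment to the effective parents of $V_j$. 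Since each c-component has size at most $k$ and each vertex has at most $d$ observable parents, each $\ve{z}_j$ has at most $kd+d$ coordinates. Thus $Q$ is a Bayes net on $\ve{C_{>\ell}}$ of in-degree $\le kd+d$, and crucially, these effective parents may include variables in $\ve{C_{\le\ell}}$, which are exactly the intervention inputs.

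Next, I would learn each conditional $P(V_j \mid \ve{z}_j)$ by rejection sampling on the observational samples of $P$: count the empirical fraction of samples in which $V_j = v_j$ among those with $\ve{Z}_j = \ve{z}_j$. For $\widehat{Q}$ to approximate $Q$ uniformly over every intervention value of $\ve{c}_{\le\ell}$, this estimate must be accurate for every $\ve{z}_j$. By the $\alpha$-strong positivity of each $\pa^{+}(\ve{C_i})$ for $i\le\ell$, together with marginalization, the observational probability $P(\ve{Z}_j = \ve{z}_j)$ of every such parent configuration is bounded below by $\alpha^{|\ve{C_{\le\ell}}|}$. A Chernoff bound then gives a multiplicative $(1\pm\eps')$ estimate per table entry; a union bound over the $n$ nodes and $|\Sigma|^{kd+d}$ parent configurations dictates the sample budget.

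Finally, I would assemble these per-conditional estimates into the target KL bound via the Bayes-net chain rule
\[
\kl(Q,\widehat{Q}) \;\le\; \sum_{V_j \in \ve{C_{>\ell}}} \E_{\ve{Z}_j \sim Q}\!\bigl[\kl\bigl(P(\,\cdot\mid \ve{Z}_j),\widehat{P}(\,\cdot\mid \ve{Z}_j)\bigr)\bigr],
\]
combined with the standard $\chi^{2}$-upper bound: if $\widehat{P}(\,\cdot\mid \ve{z}_j)$ is a $(1\pm\eps')$ multiplicative approximation of $P(\,\cdot\mid \ve{z}_j)$, then the corresponding conditional KL is $O(\eps'^{2})$. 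Choosing $\eps' \asymp \sqrt{\eps/n}$ and summing recovers the claimed bound $\kl(Q,\widehat{Q})\le \eps$ with probability $\ge 1-\delta$. The main obstacle is precisely this uniformity requirement over intervention values: rare configurations of $\ve{c}_{\le\ell}$ have observational probability as low as $\alpha^{|\ve{C_{\le\ell}}|}$, and this worst-case lower bound is exactly what drives the $\alpha^{-|\ve{C_{\le\ell}}|}$ blow-up in the sample complexity.
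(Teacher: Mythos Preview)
Your argument has a real gap at the step asserting a uniform lower bound $P(\ve{Z}_j=\ve{z}_j)\ge\alpha^{|\ve{C_{\le\ell}}|}$ for every parent configuration of every $V_j\in\ve{C_{>\ell}}$. The positivity hypothesis covers only $\pa^+(\ve{C_i})$ for $i\le\ell$, whereas the effective parents of a node $V_j\in\ve{C_i}$ with $i>\ell$ lie inside $\pa^+(\ve{C_i})$ for that same $i>\ell$, on which nothing is assumed. For a concrete failure, take the chain $X\to Y\to Z$ with no bidirected edges and $\ve{X}=\{X\}$: each variable is its own c-component, the effective parent of $Z$ is $Y$, and $P(Y=y)$ can be arbitrarily small even though $\pa^+(\{X\})$ is $\alpha$-strongly positive. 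So some $\ve{z}_j$ may have $P(\ve{z}_j)$ near zero, and the corresponding conditional cannot be estimated by rejection sampling within any fixed budget. A secondary problem is that your per-entry multiplicative guarantee also needs $P(v_j\mid\ve{z}_j)$ bounded below, for which there is again no hypothesis; with the plain empirical estimator, an unseen outcome gives $\widehat P(v\mid\ve{z})=0$ and hence infinite KL.

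The paper's proof avoids both issues by (i) using the add-$1$ (Laplace) estimator, so that every conditional KL is trivially at most $\log(m+|\Sigma|)$, and (ii) splitting on the \emph{target} measure $Q$ rather than on $P$. When $Q(\ve{Z}_i'=\ve{a})$ exceeds a threshold $\approx \eps/(n|\Sigma|^{kd+d}\log m)$, Claim~\ref{claim:relate}---the ratio bound $P(\ve{v})/Q(\ve{w})\ge\alpha^{|\ve{C_{\le\ell}}|}$, which is where the factor $\alpha^{|\ve{C_{\le\ell}}|}$ actually enters---transfers this to a lower bound on $P(\ve{Z}_i'=\ve{a})$, so enough conditional samples are collected and Fact~\ref{fact:add1} controls the conditional KL. When $Q(\ve{Z}_i'=\ve{a})$ is below threshold, the crude $\log(m+|\Sigma|)$ bound is absorbed by the small $Q$-weight in the chain-rule decomposition of Fact~\ref{fact:kllocal}. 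Your proposal is missing exactly this light/heavy split and the smoothing that makes the light case finite.
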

We will now introduce some of the tools used in proving Lemma~\ref{lem:large}.  The following Claim relates the p.m.f.s of $Q$ and $P$.  Here we used Tian's factorization to write $Q=\prod_{V_i\in \ve{C_{>\ell}}} P(V_i\mid \ve{Z_i})$, where $\ve{Z_i}$ is the effective parents of $V_i$.

\begin{claim}\label{claim:relate}
Let $\ve{v}$ be an assignment to $\ve{V}$ and $\ve{w}$ be its restriction to $\ve{C_{>\ell}}$. Then $\alpha^{|\ve{C_{\le \ell}}|}\le \frac{P[\ve{v}]}{Q[\ve{w}]}\le 1$.
\end{claim}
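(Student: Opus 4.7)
The plan is to leverage directly the effective-parent c-component factorizations of $P$ and $Q$ that the preceding paragraph has set up. Since Tian's factorization gives $P[\ve{v}] = \prod_{V_i \in \ve{V}} P[V_i \mid \ve{Z}_i]$ (c-component factorization of the joint, with each c-factor expressed via effective parents), and $Q[\ve{w}] = \prod_{V_i \in \ve{C_{>\ell}}} P[V_i \mid \ve{Z}_i]$ by the same factorization restricted to $\ve{C_{>\ell}}$ (with each $\ve{Z}_i$ evaluated at the values inherited from $\ve{v}$, so the $\ve{C_{>\ell}}$-indexed factors agree in both expressions), I would take the ratio and cancel common factors to get
\[
\frac{P[\ve{v}]}{Q[\ve{w}]} \;=\; \prod_{V_i \in \ve{C_{\le \ell}}} P[V_i \mid \ve{Z}_i].
\]
The upper bound $\frac{P[\ve{v}]}{Q[\ve{w}]} \le 1$ is then immediate because each factor on the right is a conditional probability.

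For the lower bound $\alpha^{|\ve{C_{\le \ell}}|}$, I would show factor-by-factor that $P[V_i \mid \ve{Z}_i] \ge \alpha$ for every $V_i \in \ve{C_{\le \ell}}$. Fix $j \le \ell$ and $V_i \in \ve{C_j}$. By the definition of effective parents, $\ve{Z}_i \subseteq \pa^{+}(\ve{C_j}) \setminus \{V_i\}$, so $\{V_i\} \cup \ve{Z}_i \subseteq \pa^{+}(\ve{C_j})$. The $\alpha$-strong positivity of $\pa^{+}(\ve{C_j})$ guarantees $P[\pa^{+}(\ve{C_j}) = \ve{z}] \ge \alpha$ for every assignment $\ve{z}$, and marginalizing out the remaining coordinates $\pa^{+}(\ve{C_j}) \setminus (\{V_i\} \cup \ve{Z}_i)$ produces $P[V_i, \ve{Z}_i] \ge \alpha$ (each summand is at least $\alpha$ and there is at least one summand). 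Combined with the trivial bound $P[\ve{Z}_i] \le 1$, this yields $P[V_i \mid \ve{Z}_i] \ge \alpha$. Taking the product over $|\ve{C_{\le \ell}}|$ indices gives $\prod_{V_i \in \ve{C_{\le \ell}}} P[V_i \mid \ve{Z}_i] \ge \alpha^{|\ve{C_{\le \ell}}|}$.

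The only subtlety to watch is ensuring that the effective-parent factorizations of $P$ and $Q$ really do line up on $\ve{C_{>\ell}}$, so that the ratio cancels cleanly rather than leaving mismatched conditioning sets. This is automatic because we are intervening on a union of full c-components of $G$: each c-component $\ve{C_j}$ with $j>\ell$ remains a c-component after removing $\ve{C_{\le \ell}}$, and its c-factor is unchanged. Beyond this cancellation, the argument is just an elementary marginalization of $\alpha$-strong positivity, so no further technical difficulty is expected.
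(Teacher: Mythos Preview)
Your proposal is correct and follows essentially the same approach as the paper: cancel the $\ve{C_{>\ell}}$-indexed factors to reduce the ratio to $\prod_{V_i \in \ve{C_{\le \ell}}} P(v_i \mid \ve{z}_i)$, then bound each factor above by $1$ and below by $P(v_i,\ve{z}_i)\ge \alpha$. Your write-up is slightly more explicit than the paper's one-line proof in justifying $P(v_i,\ve{z}_i)\ge \alpha$ via marginalization from the $\alpha$-strong positivity of $\pa^{+}(\ve{C_j})$, but the argument is the same.
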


\begin{proof}
$\frac{P[\ve{v}]}{Q[\ve{w}]}= \frac{
\prod_{V_i\in \ve{V}} P(v_i\mid \ve{z_i})
}{
\prod_{V_i\in \ve{C_{>\ell}}} P(v_i\mid \ve{z_i})
}
= \prod_{V_i \in \ve{C_{\le \ell}}} P(v_i\mid \ve{z_i})
\ge \prod_{V_i \in \ve{C_{\le \ell}}} P(v_i\circ \ve{z_i})
\ge \alpha^{|\ve{C_{\le \ell}}|}.$
\end{proof}
We closely follow the $\kl$-learning result for Bayes nets given in~\cite{chowliu}. Let $\ve{Z_i'}=\ve{Z_i}\setminus \ve{C_{\le \ell}}$.  Our algorithm just learns the add-1 empirical distribution $\wh{P}[v_i \mid \ve{Z_i'}=\ve{a}])$ on the conditional samples from $P[v_i \mid \ve{Z_i'}=\ve{a}]$. Our learnt distribution $\wh{Q}$ consists of the $\wh{P}[v_i \mid \ve{Z_i'}=\ve{a}])$'s, in place of every $P[v_i \mid \ve{Z_i'}=\ve{a}])$ in the Bayes net factorization for $Q$.
\begin{fact}[\cite{dasgupta, cdks, chowliu}]\label{fact:kllocal}
$$\kl(Q,\wh{Q})=\sum_{i\in \ve{C_{>\ell}}}\sum_{\ve{a}} Q[\ve{Z_i'}=\ve{a}]\kl(P[v_i \mid \ve{Z_i'}=\ve{a}],\wh{P}[v_i \mid \ve{Z_i'}=\ve{a}]).$$
\end{fact}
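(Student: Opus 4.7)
The plan is to prove this standard chain-rule/local Markov decomposition by starting from the definition of KL divergence and expanding via the Bayes-net factorizations of $Q$ and $\wh Q$. The key observations we need are: (i) $Q$ admits a Bayes-net factorization $Q[\ve{w}]=\prod_{V_i\in\ve{C_{>\ell}}} P[v_i\mid \ve{Z_i'}=\ve{z_i'}]$, with the $V_i$'s in a topological order and with the intervention values $\ve{c_{\le\ell}}$ absorbed into the definition of these factors (this is exactly Tian's c-component factorization restricted to the post-intervention graph, where the variables in $\ve{C_{\le\ell}}$ have been clamped, so their contributions are removed from the effective parent set, leaving $\ve{Z_i'}=\ve{Z_i}\setminus\ve{C_{\le\ell}}$); and (ii) $\wh Q$ uses the same factorization with each $P[\cdot\mid\ve{Z_i'}=\ve{a}]$ replaced by its empirical counterpart $\wh P[\cdot\mid\ve{Z_i'}=\ve{a}]$, so the two distributions share the same DAG skeleton over $\ve{C_{>\ell}}$.

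First I would plug the two factorizations into the KL definition:
\[
\kl(Q,\wh Q)=\sum_{\ve{w}} Q[\ve{w}]\log\frac{Q[\ve{w}]}{\wh Q[\ve{w}]}
=\sum_{\ve{w}} Q[\ve{w}]\sum_{V_i\in \ve{C_{>\ell}}}\log\frac{P[v_i\mid \ve{z_i'}]}{\wh P[v_i\mid \ve{z_i'}]}.
\]
Swapping the two sums and, for each fixed $i$, marginalizing the inner summand (which depends only on $v_i$ and $\ve{z_i'}$) against $Q$ collapses the $\ve{w}$-sum to a sum over $(v_i,\ve{a})$ weighted by the joint marginal $Q[V_i=v_i,\ve{Z_i'}=\ve{a}]$. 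Factoring this joint as $Q[\ve{Z_i'}=\ve{a}]\cdot Q[V_i=v_i\mid\ve{Z_i'}=\ve{a}]$ pulls the marginal probability of the parents out of the inner sum, exactly matching the outer weighting claimed in the statement.

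The last step is identifying the inner sum $\sum_{v_i} Q[V_i=v_i\mid\ve{Z_i'}=\ve{a}]\log\frac{P[v_i\mid\ve{a}]}{\wh P[v_i\mid\ve{a}]}$ with $\kl(P[V_i\mid\ve{Z_i'}=\ve{a}],\wh P[V_i\mid\ve{Z_i'}=\ve{a}])$. For this I need $Q[V_i\mid\ve{Z_i'}]=P[V_i\mid\ve{Z_i'}]$, i.e., the Bayes-net factor equals the true $Q$-conditional. This follows from the local Markov property applied to the factorization in (i): because the variables in $\ve{C_{>\ell}}$ are enumerated in topological order and $\ve{Z_i'}$ contains only variables preceding $V_i$, summing $Q[\ve{w}]$ over all $V_j$ with $j\ge i$ telescopes (each later factor sums to $1$) and shows that the joint marginal over $\{V_j:j\le i\}$ equals $\prod_{j\le i} P[v_j\mid \ve{z_j'}]$; dividing the $j\le i$ version by the $j<i$ version gives $Q[V_i\mid\ve{Z_i'}=\ve{a}]=P[V_i\mid\ve{Z_i'}=\ve{a}]$.

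The main obstacle, and the only non-routine point, is establishing (i): that the post-intervention distribution $Q=P_{\ve{c_{\le\ell}}}(\ve{C_{>\ell}})$ really is a Bayes net on $\ve{C_{>\ell}}$ whose node $V_i$ has parents exactly $\ve{Z_i'}$, with conditional tables equal to the corresponding conditionals of $P$. This is where the c-component structure and Tian's factorization do the work: each c-component of $\ve{C_{>\ell}}$ contributes $\prod_j P[V_j\mid \text{effective parents in } G]$, and intervening on $\ve{C_{\le\ell}}$ amounts to fixing those effective parents that land in $\ve{C_{\le\ell}}$, leaving $\ve{Z_i'}$ as the remaining random ancestors. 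Once this structural claim is in place, the remainder is the standard Bayes-net KL chain-rule computation above and matches the identity attributed to \cite{dasgupta, cdks, chowliu}.
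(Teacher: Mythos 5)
Your proof is correct and is the standard derivation that the paper is implicitly relying on when it cites this identity from prior works: expand both Bayes-net factorizations of $Q$ and $\wh{Q}$ inside the KL definition, swap the order of summation, marginalize the term for node $V_i$ down to the joint on $(V_i,\ve{Z_i'})$, and then use the local Markov property to replace the conditional $Q[V_i\mid\ve{Z_i'}]$ by the factor $P[V_i\mid\ve{Z_i'}]$. One tiny slip: to obtain the marginal over $\{V_j: j\le i\}$ you should sum out the variables $V_j$ with $j>i$ strictly, not $j\ge i$; this is clearly a typo and does not affect the rest of the argument.
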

We also have the following guarantee about the add-1 estimator.
\begin{fact}[\cite{chowliu}]\label{fact:add1}
Let $D$ be an unknown distribution over $k$ items and $\wh{D}$ be its add-1 empirical distribution of $m$ samples. Then if $m\gtrsim {k\over \eps}\log {k\over \delta}\left( \log {k\over \eps} + \log \log {k\over \delta}\right)$ then $\kl(D,\wh{D})\le \eps$ with probability at least $(1-\delta)$.
\end{fact}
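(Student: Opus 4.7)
The plan is to write $\kl(D,\wh{D}) = \sum_{i=1}^{k} D_i \log(D_i/\wh{D}_i)$ and split the sum by whether $D_i$ is ``heavy'' ($D_i \geq \tau$) or ``light'' ($D_i < \tau$), for a threshold $\tau$ of order $\log(k/\delta)/m$ to be tuned at the end. The add-one rule guarantees the deterministic lower bound $\wh{D}_i \geq 1/(m+k)$, which tames the light regime, while the heavy regime is controlled by empirical-frequency concentration.

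For the heavy items, I would apply the multiplicative Chernoff inequality to each count $n_i$ and union-bound over the at most $k$ coordinates: with probability at least $1-\delta$, one has $|n_i - m D_i| \le O\!\left(\sqrt{m D_i \log(k/\delta)}\right)$ simultaneously for all heavy $i$. When $m\tau \gtrsim \log(k/\delta)$, this forces $\wh{D}_i / D_i \in [1/2, 2]$, and a two-term Taylor expansion of $\log(1+x)$ gives
\[
D_i\log(D_i/\wh{D}_i) \;\le\; -(\wh{D}_i - D_i) + (\wh{D}_i-D_i)^2/D_i .
\]
Summed over the heavy indices, the linear remainder equals $\sum_{D_j<\tau}(\wh{D}_j - D_j)$ because $\wh{D}$ and $D$ are both probability measures on $k$ atoms and so their mass balances across the heavy/light split; hence it is dominated by $\sum_{\text{light}} \wh{D}_j \lesssim k\tau + k/m$. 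The quadratic remainder is at most $O(k\log(k/\delta)/m)$ by the Chernoff estimate, so the heavy contribution is $\le \eps/2$ once $m \gtrsim k\log(k/\delta)/\eps$.

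For the light items, use $\wh{D}_i \ge 1/(m+k)$ deterministically to write
\[
\sum_{D_i<\tau} D_i\log(D_i/\wh{D}_i) \;\le\; \sum_{D_i<\tau} D_i\log\bigl(D_i(m+k)\bigr).
\]
Because $x\mapsto x\log(x(m+k))$ is nonpositive for $x\le 1/(m+k)$ and bounded by $\tau\log(\tau(m+k))$ on $[1/(m+k),\tau]$, this sum is at most $k\tau\log(\tau(m+k))$. Plugging in $\tau \asymp \log(k/\delta)/m$, the requirement that this be $\le \eps/2$ becomes $m \gtrsim (k/\eps)\log(k/\delta)\log\bigl((m+k)\log(k/\delta)/m\bigr)$; resolving the implicit $\log m$ by substituting the target value of $m$ into its own right-hand side replaces $\log m$ by $\log(k/\eps)+\log\log(k/\delta)$, yielding exactly the bound in the statement.

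The main obstacle I anticipate is the signed linear remainder from the Taylor expansion: it does not individually concentrate around zero, and one must exploit the probability constraint $\sum_i \wh{D}_i = \sum_i D_i = 1$ to transfer it onto the light coordinates where it can be bounded in absolute value. Secondarily, matching the precise $\log(k/\eps)+\log\log(k/\delta)$ factor (as opposed to a single $\log m$) requires the bootstrapping step at the end, and one must verify that items with $D_i = 0$ contribute zero under the convention $0\log 0 = 0$ so they can be ignored throughout.
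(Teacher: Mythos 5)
This Fact is cited from \cite{chowliu} and not proved in the present paper, so there is no in-paper proof to compare against; I assess your argument on its own. The scheme you propose --- split the atoms at a threshold $\tau\asymp\log(k/\delta)/m$, control the heavy part with a per-coordinate multiplicative Chernoff bound plus a second-order expansion of $\log$, cancel the signed linear Taylor remainder against the light atoms using $\sum_i(\wh{D}_i-D_i)=0$, and tame the light part with the deterministic add-one floor $\wh{D}_i\ge 1/(m+k)$ --- is the natural and correct route to this kind of add-one/KL bound, and each step checks out. In particular, for the Taylor step, $\log(1+x)\ge x-x^2$ for $x\ge-1/2$ is exactly what is needed once the Chernoff event forces $\wh{D}_i/D_i\in[1/2,2]$ on the heavy set, and moving the linear term onto the light atoms (where it is bounded in absolute value by $\sum_{\mathrm{light}}\wh{D}_j$, which is $O(k\log(k/\delta)/m)$ under the same Chernoff event) is the right way to handle its lack of individual concentration.

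There is one arithmetic slip in the final step, which is actually in your favor. With $\tau\asymp\log(k/\delta)/m$ and $m\ge k$ (forced by the target sample complexity whenever $\eps\le 1$), the argument of your final logarithm is $\tau(m+k)=\Theta(\log(k/\delta))$, not $\Theta(m)$: there is no implicit $\log m$ to bootstrap away. Your computation therefore yields the requirement $m\gtrsim(k/\eps)\log(k/\delta)\log\log(k/\delta)$, which is a little stronger than the stated Fact (a smaller $m$ already suffices), so the Fact is of course still established. If you want the derivation to land on the stated form exactly, replace your light-atom estimate $\log\bigl(\tau(m+k)\bigr)$ by the coarser $\log\bigl(D_i(m+k)\bigr)\le\log(m+k)=O(\log m)$, valid since $D_i\le 1$; this produces the requirement $m\gtrsim(k/\eps)\log(k/\delta)\log m$, and resolving that genuinely implicit recursion replaces $\log m$ by $\log(k/\eps)+\log\log(k/\delta)$ as you describe.
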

We are now ready to prove Lemma~\ref{lem:large}.
\begin{proof}[Proof of Lemma~\ref{lem:large}] We analyze the two cases: $Q[\ve{Z_i'}=\ve{a}]\ge \frac{\eps}{n|\Sigma|^{kd+d}\log(m+|\Sigma|)}$ and otherwise. 

In the former case, $P[\ve{Z_i'}=\ve{a}]\ge\frac{ \alpha^{|\ve{C_{\le \ell}}|}\eps}{n|\Sigma|^{kd+d+1}}$ from Claim~\ref{claim:relate} and $m= \widetilde{\Theta}\left(\frac{n|\Sigma|^{kd+d}}{ \alpha^{|\ve{C_{\le \ell}}|}\eps}\allowbreak \log {n |\Sigma|^{kd+d+1}\over \delta}\right)$ samples would ensure at least $\widetilde{\Theta}(\frac{n|\Sigma|^{kd+d+1}\cdot Q[\ve{Z_i'}=\ve{a}]}{\eps}\log {n |\Sigma|^{kd+d+1}\over \delta})$ conditional samples are seen from $P[v_i \mid \ve{Z_i'}=\ve{a}]$ with high probability from Chernoff's bound. Hence, $\kl(P[v_i \mid \ve{Z_i'}=\ve{a}],\wh{P}[v_i \mid \ve{Z_i'}=\ve{a}])\le \frac{\eps}{n|\Sigma|^{kd+d}Q[\ve{Z_i'}=\ve{a}]}$ from Fact~\ref{fact:add1} except with probability at most $\frac{\eps}{n|\Sigma|^{kd+d}}$.

In the later case, $Q[\ve{Z_i'}=\ve{a}]\le \frac{\eps}{n|\Sigma|^{kd+d}\log(m+|\Sigma|)}$ and $\kl(P[v_i \mid \ve{Z_i'}=\ve{a}],\wh{P}[v_i \mid \ve{Z_i'}=\ve{a}])\le \log(m+|\Sigma|)$, since the least add-1 probability of at most $m$ conditional samples is $\frac{1}{m+|\Sigma|}$.

Combining all the cases, the summation of the RHS of Fact~\ref{fact:kllocal} evaluates to at most $O(\eps)$.
\end{proof}

Now we describe the overall construction of the evaluator for $P_{\ve{x}}(\ve{Y})=Q\cdot R$, where $Q=\prod_{V_i \in \ve{C_{>\ell}}}Q(V_i \mid \ve{Z_i})$, $R=\prod_{V_i \in \ve{C_{\le \ell}}} S_i(V_i \mid \ve{Z_i})$, $\ve{Z_i}$ is the conditioning set for $V_i$, $Q$ is the Bayes net and $S_i$'s are the leaf distributions defined before. This is very similar to a Bayes net factorization except that the probability distribution for the factors need not be the same. Our evaluator is 
\begin{align}
\wh{P}_{\ve{x}}(\ve{y})=\prod_{V_i \in \ve{C_{>\ell}}}\wh{Q}(V_i \mid \ve{Z_i}) \prod_{V_i \in \ve{C_{\le \ell}}} \wh{S}_i(V_i \mid \ve{Z_i}),\label{eqn:phat}
\end{align}
where $\widehat{Q}$ comes from Lemma~\ref{lem:large} and $\wh{S}_i$ comes from $R$ in Claim~\ref{claim:rij}.

\begin{proof}[Proof of Theorem~\ref{thm:eval}]
We get $\tv(\prod_{V_i \in \ve{C_{>\ell}}}Q(V_i \mid \ve{Z_i}),\prod_{V_i \in \ve{C_{>\ell}}}\widehat{Q}(V_i \mid \ve{Z_i}))\le \eps$ for any fixed $\ve{c_{\le \ell}}$ from Lemma~\ref{lem:large} and Pinsker's inequality, using $m=\widetilde{\Theta}\left(\frac{n|\Sigma|^{kd+d}}{ \alpha^{|\ve{C_{\le \ell}}|}\eps^2}\log {n |\Sigma|^{kd+d+1}\over \delta}\right)$ samples and $O(mn)$ time with high probability.

We scale down $\eps$ by a $(3k)^{k+1}\cdot \ell$ factor to get that for any fixed $\ve{c_{>\ell}}$, $\wh{R}$ is point-wise $(1\pm\eps)$-approximate for $R$ with high probability using $m=O((3k)^{2(k+3)}\ell^3d\alpha^{-2}\eps^{-2}\log {|\Sigma|k\ell\over \delta})$ samples and $O(m|\Sigma|^{kd+d})$ time from Claim~\ref{claim:step7Eval} and Claim~\ref{claim:rij}.

Combining the above two pieces, we get at most  $ \eps(|\Sigma|^{k\ell}+1)$ error as follows. We get the theorem by an appropriate scaling.
\begin{align*}
    &\tv(P_{\ve{x}}(\ve{y}),\wh{P}_{\ve{x}}(\ve{y}))\\
    &=\sum_{\ve{c_{\le \ell}},\ve{c_{>\ell}}} \left|
    \prod_{V_i \in \ve{C_{>\ell}}}Q(V_i \mid \ve{Z_i}) \prod_{V_i \in \ve{C_{\le \ell}}} S_i(V_i \mid \ve{Z_i})
    -
    \prod_{V_i \in \ve{C_{>\ell}}}\wh{Q}(V_i \mid \ve{Z_i}) \prod_{V_i \in \ve{C_{\le \ell}}} \wh{S}_i(V_i \mid \ve{Z_i})
    \right|\\
    &= \sum_{\ve{c_{\le \ell}},\ve{c_{>\ell}}} \left|
\prod_{V_i \in \ve{C_{>\ell}}}Q(V_i \mid \ve{Z_i}) \prod_{V_i \in \ve{C_{\le \ell}}} S_i(V_i \mid \ve{Z_i})
    -\prod_{V_i \in \ve{C_{>\ell}}}Q(V_i \mid \ve{Z_i}) \prod_{V_i \in \ve{C_{\le \ell}}} \wh{S}_i(V_i \mid \ve{Z_i})\right.\\
    &\qquad\qquad\left.
    +\prod_{V_i \in \ve{C_{>\ell}}}Q(V_i \mid \ve{Z_i}) \prod_{V_i \in \ve{C_{\le \ell}}} \wh{S}_i(V_i \mid \ve{Z_i})
    -
    \prod_{V_i \in \ve{C_{>\ell}}}\wh{Q}(V_i \mid \ve{Z_i}) \prod_{V_i \in \ve{C_{\le \ell}}} \wh{S}_i(V_i \mid \ve{Z_i})
        \right|\\
    &\le \sum_{\ve{c_{>\ell}}}\sum_{\ve{c_{\le \ell}}}\left|\prod_{V_i \in \ve{C_{>\ell}}}Q(V_i \mid \ve{Z_i})\prod_{V_i\in \ve{C_{\le\ell}}}S_i(V_i \mid \ve{Z_i})-(1\pm\eps)\prod_{V_i \in \ve{C_{>\ell}}}Q(V_i \mid \ve{Z_i})\prod_{V_i\in \ve{C_{\le \ell}}}S_i(V_i \mid \ve{Z_i})\right|\\
    &\qquad\qquad\qquad\qquad+
    \sum_{\ve{c_{\le\ell}}}\sum_{\ve{c_{> \ell}}}\left|\prod_{V_i\in \ve{C_{>\ell}}} Q_i(V_i \mid \ve{Z_i})-\prod_{V_i\in \ve{C_{>\ell}}}\wh{Q}_i(V_i \mid \ve{Z_i})\right|\\
    &\le \eps\sum_{\ve{c_{>\ell}}}\sum_{\ve{c_{\le \ell}}}\prod_{V_i \in \ve{C_{>\ell}}}Q(V_i \mid \ve{Z_i})\prod_{V_i \in \ve{C_{\le \ell}}} S_i(V_i \mid \ve{Z_i})  + \sum_{\ve{c_{\le\ell}}}\eps
    \le \eps(1+|\Sigma|^{k\ell}).\qedhere
\end{align*}
\end{proof}
\subsection{Generator}\label{sec:samp}
In this section, we give an algorithm for generating samples approximately according to the distribution $P_{\ve{x}}(\ve{y})$. Our algorithm generates samples according to the distribution $\wh{P}_{\ve{x}}(\ve{y})$ given in \eqref{eqn:phat}.
\begin{theorem}
We can generate independent samples from $\wh{P}_{\ve{x}}(\ve{y})$ in $O(n)$ time.
\end{theorem}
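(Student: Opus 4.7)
The plan is to observe that the density $\wh{P}_{\ve{x}}(\ve{y})$ given in equation~\eqref{eqn:phat} already has a Bayes-net-like factorization along the topological order of $G$ restricted to $\ve{V}$, so it can be sampled by standard ancestral (forward) sampling. First I would fix a topological ordering $V_1, V_2, \ldots, V_n$ of the observable variables consistent with $G$. Each variable $V_i$ contributes exactly one factor to \eqref{eqn:phat}: either $\wh{Q}(V_i \mid \ve{Z_i})$ if $V_i \in \ve{C_{>\ell}}$, or $\wh{S}_i(V_i \mid \ve{Z_i})$ if $V_i \in \ve{C_{\le \ell}}$, where in both cases $\ve{Z_i}$ is the set of effective parents of $V_i$.

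Next I would verify the key structural property that $\ve{Z_i} \subseteq \{V_1, \ldots, V_{i-1}\}$ for every $i$. For $V_i \in \ve{C_{>\ell}}$ this is immediate from the definition of effective parents used in the Tian factorization underlying Lemma~\ref{lem:large}: $\ve{Z_i} = \pa^{+}(\ve{C}) \cap \{V_1,\ldots,V_{i-1}\}$ where $\ve{C}$ is the c-component containing $V_i$. For $V_i \in \ve{C_{\le \ell}}$, the leaf distributions $\wh{S}_i$ constructed in Section~\ref{sec:eval} come from steps~1 and~5b of Algorithm~\ref{algo:id}, whose outputs at each leaf are products of conditionals $P(V_i \mid V_1,\ldots,V_{i-1})$ (or a restriction of these to effective-parent sets), so their conditioning sets likewise involve only topologically earlier variables. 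The variables in $\ve{X}$ appear only in conditioning sets (they are fixed to $\ve{x}$ by the intervention), never as factors to sample.

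The generation procedure is then: initialize $V_i \leftarrow x_i$ for each $V_i \in \ve{X}$; then, processing $i = 1, 2, \ldots, n$ in order, for each $V_i \in \ve{Y}$, read off the already-determined assignment to $\ve{Z_i}$, index into the stored conditional probability table for $\wh{Q}(V_i \mid \ve{Z_i})$ or $\wh{S}_i(V_i \mid \ve{Z_i})$ at that row to obtain a distribution over $\Sigma$, and draw $V_i$ from it. Because each factor in \eqref{eqn:phat} conditions only on variables processed earlier, ancestral sampling produces a joint sample distributed exactly according to $\wh{P}_{\ve{x}}(\ve{y})$. Each step is an $O(1)$ table lookup plus an $O(|\Sigma|)$ draw from a categorical distribution, giving $O(n)$ total time (with $|\Sigma|$ treated as a constant, consistent with the regime of Theorem~\ref{thm:eval}).

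There is no real analytic obstacle here; the statement is essentially a structural corollary of the decomposition established in Section~\ref{sec:eval}. The only point worth emphasizing is the bookkeeping for the factors arising from the recursion inside Algorithm~\ref{algo:id}: one must check that after the recursive unrolling performed in Section~\ref{sec:eval} the resulting estimator is expressed as a product whose $V_i$-th factor is a distribution of $V_i$ conditioned on a subset of $\{V_1,\ldots,V_{i-1}\} \cup \ve{X}$, which is exactly what steps~5b/5c of Algorithm~\ref{algo:id} produce via Bayes-rule-style conditional factorizations. Once this is in hand, correctness and the $O(n)$ running time follow immediately.
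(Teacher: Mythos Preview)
Your proposal is correct and follows essentially the same approach as the paper: both exploit the factorization \eqref{eqn:phat} and the observation that every conditioning set $\ve{Z_i}$ lies among the topologically earlier variables (the paper phrases this as ``the effective graph in any step of the recursion is always a subgraph of $G$''), so ancestral sampling in the topological order of $\ve{V}\setminus\ve{X}$ yields a sample from $\wh{P}_{\ve{x}}(\ve{y})$ in $O(n)$ time. Your write-up is somewhat more explicit about why $\ve{Z_i}\subseteq\{V_1,\ldots,V_{i-1}\}$ holds separately for the $\ve{C_{>\ell}}$ and $\ve{C_{\le\ell}}$ factors, but the substance is the same.
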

\begin{proof}
We have $P_{\ve{x}}(\ve{y})=\prod_{V_i \in \ve{C_{>\ell}}}\wh{Q}(V_i \mid \ve{Z_i}) \prod_{V_i \in \ve{C_{\le \ell}}} \wh{S}_i(V_i \mid \ve{Z_i})$ from \eqref{eqn:phat}, where $\hat{Q}$ comes from Lemma~\ref{lem:large} and $\wh{S}$ comes from $R$ of Claim~\ref{claim:rij}. Note that either in Lemma 2 or Claim 8, the effective graph in any step of the recursion is always a subgraph of $G$. Hence, the topological order between $V_i$ and $\ve{Z_i}$ is never violated. In particular, we can sample each random variable of $\ve{Y}$ in the topological order of $\ve{V}\setminus \ve{X}$. Then at any step of this sampling, whenever we try to sample some $v_i\sim \wh{Q}(V_i \mid \ve{Z_i})$ or $\sim \wh{S}_i(V_i\mid \ve{Z_i})$, $\ve{Z_i}$ would always be sampled before it and assigned.
\end{proof}

\begin{remark}
\emph{
It follows that we can also sample from any $\ve{T}\subseteq \ve{V}\setminus \ve{X}$, ignoring the unnecessary variables. However, evaluating the marginal on $\ve{T}$ in general remains computationally expensive.
}
\end{remark}
\subsection{Examples}
In this section, we illustrate our evaluator and generator with the help of two simple examples.

\paragraph{Example 1:} Consider the graph in Figure~\ref{fig:admg4} taken from \cite{tianthesis}. We would like to learn the intervention $P_{x}(z_1,z_2,y)$. The following sequence of steps are taken in our algorithm.
\begin{enumerate}
    \item The starting graph has two c-components $(X,Z_2)$ and $(Y,Z_1)$. Only $(X,Z_2)$ contains an intervening variable. So, we'll assume $(X,Z_1,Z_2)$ is $\alpha$-strongly positive.
    \item Step 4 is taken first and generates: ID($(y,z_1),(x,z_2),P,G$) $*$ ID($z_2,(x,y,z_1),P,G$). Since, $P$ is unchanged, we don't do anything.
    \item The first ID takes step 6 and generates the formula: $P[y \mid x,z_1,z_2]P[z_1\mid x]$. Note that we are learning this distribution jointly as a Bayes net, assuming $\alpha$-strong positivity of only $(X,Z_1,Z_2)$.
    \item $Y$ is marginalized out in step 2 from the second ID. Hence, we don't change the distribution. Next call is ID($z_2,(x,z_1),P[X,Z_1,Z_2],H$), where $H$ is the graph of Figure~\ref{fig:admg5}.
    \item Step 7 is taken next: ID($z_2,x,S[X',Z_2],T$), where $S[X',Z_2]=P[X',Z_2 \mid do(z_1)]=P[X']P[Z_2\mid z_1,X']$. We learn $S$ up to point-wise $(1\pm 6\eps)$-factor using $\wt{O}(\alpha^{-2}\eps^{-2})$ samples and store it as a table. $T$ is the graph shown in Figre~\ref{fig:admg6}.
    \item Step 2 next prunes $X$ from $T$. The distribution is a marginal and hence $S$ is unchanged.
    \item Step 1 generates the formula: $S[z_2]$.
\end{enumerate}
At this point, we will be able to evaluate and sample $P[y \mid x,z_1,z_2]P[z_1\mid x]S[z_2]$ using our Bayes net $P$ and the explicit table for $S$.

\paragraph{Example 2:} Consider the graph in Figure~\ref{fig:admg1} taken from \cite{bareinboim}. We are interested to learn the intervention $P_{(w,r,y)}(Y)$. The following sequence of steps are taken in our algorithm.
\begin{enumerate}
    \item The starting graph $G$ has two c-components: $(W,X,Y)$ and $R$. Both these components contain an intervening variable and hence assumed $\alpha$-strongly positive together with their parents.
    \item First, step 7 is taken: ID($y,(w,x),Q[W',X',Y],H$), where $Q[W',X',Y]=P[W'X'Y \mid \cdo(r)]=P[W']P[X' \mid W',r]P[Y \mid X',r,W']$. So, we learn $Q$ as a point-wise $(1\pm9\eps)$-apporox. p.m.f. table using $\wt{O}(\alpha^{-2}\eps^{-2})$ samples from $P$. The graph $H$ is shown in Figure~\ref{fig:admg2}.
    \item Next, step 2 prunes $W$ from $H$. The new distribution is the marginal of $Q$ on $X',Y$ and so $Q$'s table is passed on unchanged. The new graph $T$ is shown in Figure~\ref{fig:admg3}. So, the next call is ID($y,x,Q[X',Y],T$).
    \item Finally, step 6 is taken, which generates the final formula $Q[y \mid x]$.
\end{enumerate}
At this point, we will be able to evaluate and sample $Q[y \mid x]$ using our stored table for $Q$.

\section{Hardness of Evaluating Marginals}\label{sec:hard}

In this section we show the computational hardness for approximately giving a succinct evaluator for the {\em marginal} of an interventional distribution in general. This is in contrast to our result in Section~\ref{sec:eval}, where we showed for the specific case that whenever $\ve{X}$ is of bounded size and $\ve{Y}=\ve{V}\setminus \ve{X}$, we can efficiently give a generator as well as an evaluator for $\wh{P}_{\ve{x}}(\ve{Y})$, which was $\eps$ close to $P_{\ve{x}}(\ve{Y})$ in $\tv$. Thus, our hardness in this section applies to the case when $\ve{Y}\subset \ve{V}\setminus \ve{X}$. In fact we show the hardness for the empty intervention $P(\ve{Y})$, which is just the marginal of a completely observed Bayesian network of indegree 2.

We need the following definitions.

\begin{definition}[Polynomial-time samplable distributions]
Given a Boolean circuit $C_n$ mapping  $n$ bits to $m$ bits, the distribution sampled by $C_n$ is obtained by uniformly choosing $x \in  \{0,1\}^n$ and evaluating $C$ on $x$. A distribution is polynomial-time samplable distribution if it is sampled by a circuit $C_n$ of size $\poly(n)$. We often use $C$ itself to denote the distribution sampled by the circuit $C_n$.
\end{definition}

We will use the following hardness result for the tolerant testing of two polynomial-time samplable distributions.

\begin{definition}[Tolerant testing of polynomial-time samplable distributions]
Let $\distckt$ be the following problem: given the encodings of two boolean circuits $C_n$ and $D_n$, both of which output exactly $m=\poly(n)$ bits, distinguish between the two cases: $\tv(C_n,D_n)\le 1/3$ versus $\tv(C_n,D_n)\ge 2/3$. 
\end{definition}
\begin{theorem}[\cite{SV03}]
$\distckt$ is complete for the complexity class Statistical Zero Knowledge (denoted $\szk$).

\end{theorem}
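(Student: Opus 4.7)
My plan is to follow the classical two-part approach of Sahai and Vadhan: show that $\distckt \in \szk$, and then show that every problem in $\szk$ reduces to $\distckt$. Since both directions use the promise gap $\{1/3, 2/3\}$, the linchpin of the argument will be a \emph{polarization lemma} that amplifies this constant gap to an exponentially small / exponentially close-to-one gap.

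For the containment direction, I would first establish polarization: given circuits $C_n, D_n$ with $\tv(C_n, D_n)$ either at most $1/3$ or at least $2/3$, construct in polynomial time new circuits $C_n', D_n'$ with $\tv$ either at most $2^{-k}$ or at least $1 - 2^{-k}$. The idea is to interleave two operations with opposite effects on the two regimes: tensor powers $C^{\otimes t}$ push far pairs toward distance $1$ but do not hurt close pairs much when combined with a ``XOR-style'' convex combination that maps close pairs even closer. Alternating both operations cleanly separates the two cases. Once polarized, an honest-verifier $\szk$ protocol is immediate: the verifier privately picks $b \in \{0,1\}$, samples $z$ from $C_n'$ if $b = 0$ and from $D_n'$ otherwise, and sends $z$ to the prover, who replies with a guess $b'$; the verifier accepts iff $b' = b$. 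In the far case an unbounded prover wins with probability $\ge 1 - 2^{-k}$; in the close case no prover wins with probability more than $1/2 + 2^{-k}$. A simulator in the accepting (far) case simply samples $b, z$ jointly and outputs the transcript $(z, b)$, matching the true interaction statistically.

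For the hardness direction, fix any promise problem in $\szk$ with honest-verifier simulator $S$ for an interactive protocol, where without loss of generality the transcript format and round structure are fixed. I would build two circuits $C_x, D_x$ from $S$ on input $x$. A standard choice is to consider a random round $i$ and let $C_x$ sample the simulator's joint distribution on $(\text{transcript up to round } i, \text{message at round } i+1)$, while $D_x$ samples the prefix from $S$ and the next message \emph{independently} from its marginal (or, depending on whose turn it is, from the honest verifier's distribution given the prefix). In the YES case the simulator is close to the real interaction, which satisfies the honest-verifier Markov property across rounds, so the two distributions are close on average. In the NO case, soundness forces at least one round in which the simulator must deviate substantially from any Markov chain consistent with real prover behavior, giving far distributions. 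An averaging and amplification step then produces a single pair meeting the $\{1/3, 2/3\}$ gap. The equivalence $\szk = \mathrm{HVSZK}$ of Goldreich--Sahai--Vadhan lets us assume honest-verifier simulators without loss of generality.

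The step I expect to be most delicate is polarization, since naive amplification schemes move both close and far pairs in the same direction; the careful combination of tensor powers with convex-combination smoothing, together with the accounting of error across alternations, is where the real technical work lies. The reduction on the hardness side is comparatively mechanical once one has the simulator and the honest-verifier reduction in hand.
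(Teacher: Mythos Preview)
The paper does not prove this theorem at all: it is stated as a cited result from \cite{SV03} and used as a black box. There is no proof in the paper to compare your proposal against. Your sketch is a reasonable outline of the Sahai--Vadhan argument (polarization for membership, simulator-based reduction for hardness), but for the purposes of this paper no such argument is needed or expected; the theorem is simply quoted from the literature.
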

The class $\szk$ contains several hard computational problems including the Graph Isomorphism problem and is believed to be 
computationally harder than $\bpp$; the class of problems that admit efficient randomized algorithms.

In the following, we show that $\distckt$ reduces to the problem of computing an efficient, approximate evaluator for the marginal distribution of a Bayes net, thereby showing that the later problem is hard. We formally define this problem now.

\begin{definition}[Efficient $\eps$-approximate evaluator circuit]
A circuit $\mathcal{E}_P: \{0,1\}^n\rightarrow \mathbb{R}$ is called an efficient $\eps$-additive evaluator for a distribution $P$ over $\{0,1\}^n$, if there exists a distribution $\wh{P}$ over $\{0,1\}^n$ such that:
\begin{enumerate}
    \item {[Additive approximation]} $\tv(P,\wh{P})\le \eps$,
    \item {[Evaluation query]} For any $x\in \{0,1\}^n$, $\mathcal{E}_P$ on input $x$ outputs a number $p\in[1-\eps,1+\eps]\wh{P}(x)$ in $\poly(n)$ time,
\end{enumerate} 
\end{definition}

\begin{definition}
Let $\bayesmar$ be the following problem: given a parameter $\eps$, samples from an unknown Bayes net $P$, and a $S\subseteq [n]$, output an efficient $\eps$-approximate evaluator circuit $Q:\{0,1\}^{|S|}\rightarrow \mathbb{R}$ for the marginal distribution of $P$ over $S$.
\end{definition}

We also need the following result for approximating the variation distance additively.
\begin{theorem}[\cite{bgmv20}]\label{thm:bgmv}
Let $P$ and $Q$ be two unknown distributions over $\Omega$. Then given access to two efficient $\eps$-approximate evaluator circuits $\mathcal{E}_P$ and $\mathcal{E}_Q$ for $P$ and $Q$, we can estimate $\tv(P,Q)$ up to an additive $4\eps$ error with $(1-\delta)$ probability using $O(\eps^{-2}\log {1\over \delta})$ samples from $P$ and $O(\eps^{-2}\log {1\over \delta})$ evaluation queries to $\mathcal{E}_P$ and $\mathcal{E}_Q$.
\end{theorem}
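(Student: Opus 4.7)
The plan is to reduce the problem of estimating $\tv(P,Q)$ to estimating a bounded expectation via the standard identity
\[
\tv(A,B) \;=\; \mathbb{E}_{x \sim A}\bigl[(1 - B(x)/A(x))_+\bigr],
\]
which holds because the event achieving the maximum in $\max_E A(E) - B(E)$ is $E = \{x : A(x) > B(x)\}$. Applying this to the distributions $\hat{P}, \hat{Q}$ guaranteed by the evaluator circuits yields $\tv(\hat{P}, \hat{Q}) = \mathbb{E}_{x \sim \hat{P}}[(1 - \hat{Q}(x)/\hat{P}(x))_+]$, an expectation of a $[0,1]$-valued function, which can be driven back to $\tv(P,Q)$ by the triangle inequality for $\tv$.

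The proposed estimator is: draw $m = O(\eps^{-2} \log \delta^{-1})$ samples $x_1, \ldots, x_m$ from $P$, query the evaluators to get $\hat p_i = \mathcal{E}_P(x_i)$ and $\hat q_i = \mathcal{E}_Q(x_i)$, and output
\[
T \;=\; \frac{1}{m}\sum_{i=1}^{m} \bigl(1 - \hat q_i / \hat p_i\bigr)_+.
\]
I would then accumulate the error in four pieces: (i) replacing samples from $\hat{P}$ by samples from $P$ inside the expectation costs at most $\tv(P, \hat{P}) \le \eps$, since the integrand lies in $[0,1]$; (ii) replacing the exact ratio $\hat{Q}(x_i)/\hat{P}(x_i)$ by the noisy ratio $\hat q_i/\hat p_i$ costs at most $\frac{2\eps}{1-\eps}$ pointwise, by a short case split on whether this ratio exceeds $1$; (iii) Hoeffding's inequality on $m$ i.i.d. $[0,1]$-valued samples concentrates $T$ within $\eps$ of its expectation with probability at least $1-\delta$; (iv) a final triangle inequality $|\tv(P,Q) - \tv(\hat{P},\hat{Q})| \le \tv(P,\hat{P}) + \tv(Q,\hat{Q}) \le 2\eps$ converts the $\hat{}$-world estimate back to the target. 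Summing gives $|T - \tv(P,Q)| \le 4\eps$ after absorbing $O(\eps^2)$ lower-order terms by slightly shrinking the input approximation parameter.

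The one step that requires genuine care is (ii), because both numerator and denominator of the ratio carry independent multiplicative $(1 \pm \eps)$-errors that could compound badly near the threshold $\hat{Q}(x)/\hat{P}(x) = 1$. I would handle this by noting that $\hat q_i/\hat p_i \in [\tfrac{1-\eps}{1+\eps},\tfrac{1+\eps}{1-\eps}]\cdot\hat{Q}(x_i)/\hat{P}(x_i)$, so the additive perturbation in the ratio is bounded by $\tfrac{2\eps}{1-\eps}\cdot\min(1,\hat{Q}(x_i)/\hat{P}(x_i))$, and then verifying by cases that applying the $1$-Lipschitz map $t \mapsto (1-t)_+$ preserves this bound uniformly. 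Everything else is standard concentration plus metric triangle inequalities, so after establishing the pointwise ratio bound the four summands telescope cleanly into the claimed $4\eps$-error guarantee with the stated sample and query complexity.
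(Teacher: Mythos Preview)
The paper does not prove this theorem; it is quoted without proof as a black-box result from \cite{bgmv20}. Your reconstruction is correct and is essentially the argument used in that reference: write $\tv(\hat P,\hat Q)$ as $\mathbb{E}_{x\sim\hat P}[(1-\hat Q(x)/\hat P(x))_+]$, estimate this bounded expectation by Monte Carlo, and then bound separately the four error sources (shift from $\hat P$-samples to $P$-samples, evaluator noise in the ratio, Hoeffding fluctuation, and the final triangle inequality back to $\tv(P,Q)$).

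Two minor points worth tightening. First, in step~(ii) the bound $\tfrac{2\eps}{1-\eps}\cdot\min(1,\hat Q/\hat P)$ on the perturbation of the ratio is not literally correct when $\hat Q/\hat P$ is just above $1$; what does hold, and is all you need, is the uniform bound $\bigl|(1-\tilde r)_+-(1-r)_+\bigr|\le \tfrac{2\eps}{1-\eps}$, which you can verify by splitting into the three ranges $r\le 1$, $1<r\le\tfrac{1+\eps}{1-\eps}$, and $r>\tfrac{1+\eps}{1-\eps}$. Second, you should fix a convention (say, output $1$) for the degenerate event $\hat p_i=0$; since $\tv(P,\hat P)\le\eps$ this event has $P$-probability at most $\eps$ and is already absorbed into your step~(i).
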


\begin{theorem}\label{thm:bayesmar}
If $\bayesmar$ has a randomized polynomial-time algorithm even for Bayes nets of indegree at most 2, then $\distckt\in \bpp$ and hence $\szk \subseteq \bpp$.
\end{theorem}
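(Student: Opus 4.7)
The plan is to reduce $\distckt$ to $\bayesmar$ by encoding polynomial-time samplable distributions as marginals of constant-indegree Bayes nets, then invoking Theorem~\ref{thm:bgmv} to compare the resulting evaluators. Given circuits $C_n, D_n$ on $n$ input bits with $m$ output bits (the input to $\distckt$), I would (i) construct Bayes nets $P_C$ and $P_D$ of indegree at most $2$ whose marginals on a designated set of output-wire variables $S$ are exactly the distributions sampled by $C_n$ and $D_n$; (ii) invoke the hypothesized $\bayesmar$ algorithm to obtain efficient evaluators; and (iii) apply Theorem~\ref{thm:bgmv} to estimate the TV distance and decide $\distckt$.

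For step (i), without loss of generality assume the circuits use only $\mathrm{AND}, \mathrm{OR}, \mathrm{NOT}$ gates, each with fan-in at most $2$. Construct a Bayes net with one observable variable per wire: input-wire variables have no parents and are $\ber(1/2)$, while each gate variable has as parents the (at most two) variables feeding the gate and a \emph{deterministic} conditional distribution computing the gate function. The DAG is acyclic since circuits are, the indegree is $\leq 2$ by construction, and sampling is trivial (flip coins for input wires and propagate through the gates), so we can feed $\bayesmar$ as many samples as it needs. The marginal on the output-wire variables $S$ is identically the distribution sampled by the circuit.

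For step (iii), fix a small constant $\eps$ (say $\eps = 1/40$). Running the $\bayesmar$ solver on $P_C, P_D$ and $S$ yields evaluator circuits $\mathcal{E}_C, \mathcal{E}_D$ for some distributions $\wh{C}, \wh{D}$ satisfying $\tv(\wh{C}, C_n), \tv(\wh{D}, D_n) \leq \eps$ and providing multiplicative $(1\pm\eps)$-approximations of $\wh{C}(\cdot), \wh{D}(\cdot)$ in $\poly(n)$ time per query. Theorem~\ref{thm:bgmv} then produces an estimate $\hat{t}$ with $|\hat{t} - \tv(\wh{C}, \wh{D})| \leq 4\eps$ using $\poly(n)$ samples and queries. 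Combining with two applications of the triangle inequality yields $|\hat{t} - \tv(C_n, D_n)| \leq 6\eps < 1/6$, so thresholding $\hat{t}$ at $1/2$ distinguishes $\tv(C_n, D_n) \leq 1/3$ from $\tv(C_n, D_n) \geq 2/3$ with high probability. The whole procedure runs in randomized polynomial time, placing $\distckt$, and therefore $\szk$, in $\bpp$.

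The main conceptual step is the circuit-to-Bayes-net encoding: recognizing that any polynomial-time samplable distribution is the output marginal of an indegree-$2$ Bayes net. The more delicate bookkeeping is checking that the $\bayesmar$ black box is being invoked legitimately on $P_C, P_D$ (we must supply samples, not an explicit description, which is fine since our explicit sampler is efficient) and that the promised indegree-$2$ and polynomial-size conditions are respected. The constant chasing in step (iii) is routine once the reduction is in place.
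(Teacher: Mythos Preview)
Your proposal is correct and follows essentially the same route as the paper: encode each sampling circuit as an indegree-$2$ Bayes net with uniform input bits and deterministic gate conditionals, run the hypothesized $\bayesmar$ solver with $\eps=1/40$ to get approximate evaluators for the output-wire marginals, and then invoke Theorem~\ref{thm:bgmv} to estimate the TV distance and decide the $1/3$ vs.\ $2/3$ gap. One minor note: Theorem~\ref{thm:bgmv} as stated already absorbs the $\eps$-approximation of the evaluators into its $4\eps$ guarantee on $\tv(C_n,D_n)$ directly, so your extra triangle-inequality step to reach $6\eps$ is unnecessary (though harmless).
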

\begin{proof}
Let $\mathcal{A}$ be the hypothetical randomized polynomial-time algorithm for $\bayesmar$. We solve $\distckt$ on $C_n$ and $D_n$ using $\mathcal{A}$ as follows.

Firstly, we assume without loss of generality all the AND, OR, NOT gates of $C_n$ and $D_n$ have at most 2-arguments, since otherwise we stack such gates, increasing the circuit size by at most a polynomial. Then, the two circuits are exactly Bayes nets of indegree at most 2, whose source nodes are random and intermediate nodes follow deterministic functions. We also denote by $C_n$ and $D_n$ these two samplable distributions with a slight abuse of notation. Let $S_C$ and $S_D$ denote the sets of output bits of $C_n$ and $D_n$, such that $|S_C|=|S_D|$. Let $C_{\all}$ and $D_{\all}$ denote the joint distributions of {\em all the gates} of $C_n$ and $D_n$. Hence, $C_n$ and $D_n$ are respectively the marginals of $C_{\all}$ and $D_{\all}$ over the sets $S_C$ and $S_D$.  

We run $\mathcal{A}$ on $(C_{\all},S_C,1/40)$ to get an efficient $\eps$-approximate evaluator $\mathcal{E}_C$ for $C_n$ with high probability. Similarly we get $\mathcal{E}_D$ for $D_n$. Moreover, $C_n$ can be sampled in randomized polynomial time. Hence from Theorem~\ref{thm:bgmv}, using $\mathcal{E}_C$, $\mathcal{E}_D$, and samples from $C_n$, we can approximate $\tv(C_n,D_n)$ up to an additive error $1/10$ in polynomial-time with high probability. This would show $\distckt\in\bpp$. 
\end{proof}

Finally, we show a stronger hardness result for getting an evaluator for the marginal of a Bayes net if we want a {\em multiplicative} approximation of the p.m.f.,  formally defined below.
\begin{definition}[Efficient $c$-multiplicative evaluator circuit]
A circuit $\mathcal{E}_P:\{0,1\}^n\rightarrow \mathbb{R}$ is called an efficient $c$-multiplicative evaluator circuit for a distribution $P$ over $\{0,1\}^n$, if there exists a distribution $\wh{P}$ over $\{0,1\}^n$ such that:
\begin{enumerate}
    \item {[Multiplicative approximation]} $\wh{P}(x)/P(x)\in[1/c,c]$ for some constant $c>1$ and for any $x\in {0,1}^n$.
    \item {[Evaluation]} for any $x\in \{0,1\}^n$, $\mathcal{E}_P$ on input $x$ outputs $\wh{P}(x)$ in $\poly(n)$ time.
\end{enumerate}
\end{definition}

\begin{definition}[\bayesmarmult]
Let $\bayesmarmult$ be the following problem: given a Bayesian network $P$ over the binary variables $[n]$, and a $S\subseteq [n]$, return an efficient $c$-multiplicative evaluator circuit for the marginal distribution of $P$ over $S$.
\end{definition}

We reduce from the circuit evaluation problem which is well-known to be $\np$-hard.
\begin{definition}[\circeval]
Given the encoding of a Boolean circuit $C:\{0,1\}^n\rightarrow\{0,1\}$ as input, decide whether there exists an $x$ such that $C(x)=1$ or not.
\end{definition}

\begin{theorem}
$\circeval$ is $\np$-complete.
\end{theorem}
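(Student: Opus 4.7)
The plan is to verify both membership in \np{} and \np-hardness. Membership is the easy direction: a satisfying input $x \in \{0,1\}^n$ is a polynomial-size certificate. A verifier can evaluate the circuit $C$ on $x$ by processing its gates in topological order, with each gate taking constant time once its input wires are assigned. The total work is linear in the size of the encoding of $C$, which places \circeval{} in \np.

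For hardness, I would reduce from CNF-SAT, which is \np-complete by the Cook--Levin theorem. Given a CNF formula $\varphi$ over $n$ variables with $m$ clauses, the reduction constructs a Boolean circuit $C_\varphi:\{0,1\}^n\to\{0,1\}$ that computes $\varphi$: for each clause, take a fan-in-2 tree of OR gates over the (possibly negated) literals, using NOT gates in front of negated variables; then AND the clause outputs together using a fan-in-2 tree of AND gates. The resulting circuit has size $O(nm)$, is constructible in polynomial time from $\varphi$, and satisfies $C_\varphi(x)=\varphi(x)$ for every assignment $x$. Hence $\varphi$ is satisfiable iff there exists $x$ with $C_\varphi(x)=1$, i.e., iff $C_\varphi \in \circeval$.

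Since the reduction is entirely syntactic and the nontrivial content, namely \np-hardness of SAT, is invoked as a black box via Cook--Levin, there is no real obstacle. The only mild subtlety is handling unbounded fan-in or exotic gate types in the circuit model, but the standard convention that circuits are built from binary AND, OR, and NOT gates makes even this transparent, and any reasonable encoding admits the polynomial-time gate-by-gate evaluator used in the NP verifier.
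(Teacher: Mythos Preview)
Your proof is correct and follows the standard textbook argument. The paper itself does not provide a proof of this theorem at all; it simply states it as a well-known fact (indeed, despite the name \circeval, the problem as defined is just Circuit-SAT), so there is nothing to compare against beyond noting that your argument correctly supplies the details the paper omits.
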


Now, we give a reduction from $\circeval$ to $\bayesmarmult$, showing that the later problem is unlikely to be in randomized polynomial-time.

\begin{theorem}
If $\bayesmarmult$ has a randomized polynomial time algorithm even for Bayes nets of indegree at most 2, then $\circeval\in \bpp$ and hence $\np\subseteq \bpp$.
\end{theorem}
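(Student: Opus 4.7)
The plan is to mimic the reduction from the previous theorem (converting a circuit into a Bayes net), but exploit the much stronger \emph{multiplicative} guarantee to decide circuit satisfiability exactly rather than estimate statistical distance. Given an instance $C: \{0,1\}^n \to \{0,1\}$ of $\circeval$, I would first preprocess $C$ so that every gate has fan-in at most $2$ (stacking AND/OR/NOT gates if needed, which only polynomially increases the size). I would then interpret the resulting circuit as a Bayes net $P$ on binary variables whose source nodes (the $n$ input wires) are independent uniform bits, and whose internal variables are the gate outputs, each deterministically computed from the values of its at most two parents. The output wire $Y$ becomes a single distinguished variable, and the resulting Bayes net has indegree at most $2$.

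The key observation is that the marginal of $P$ on $S = \{Y\}$ satisfies
\[
P(Y = 1) = \frac{|\{x \in \{0,1\}^n : C(x) = 1\}|}{2^n}.
\]
In particular, $P(Y=1) = 0$ exactly when $C$ is unsatisfiable, and otherwise $P(Y=1) \ge 2^{-n} > 0$. Now I would invoke the hypothetical randomized polynomial-time algorithm for $\bayesmarmult$ with input $(P, S, c)$ to obtain an efficient $c$-multiplicative evaluator circuit $\mathcal{E}$ for the marginal of $P$ on $S$; by the definition of $c$-multiplicative evaluator, the distribution $\wh{P}$ realized by $\mathcal{E}$ satisfies $\wh{P}(y)/P(y) \in [1/c, c]$ for every $y \in \{0,1\}$. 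This forces $\wh{P}(1) = 0$ iff $P(Y=1)=0$, and $\wh{P}(1) > 0$ iff $P(Y=1) > 0$, since the ratio condition makes zero go to zero and any positive value map to a positive value bounded below by $P(Y=1)/c$.

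Hence a single evaluation query $\mathcal{E}(1)$, computable in polynomial time, decides whether $C$ is satisfiable: accept iff $\mathcal{E}(1) > 0$. Since the entire construction (building the Bayes net, calling $\mathcal{A}$, and reading off the output) runs in randomized polynomial time and succeeds with high probability whenever $\mathcal{A}$ does, this places $\circeval$ in $\bpp$, and hence $\np \subseteq \bpp$. The only potential obstacle is the semantic question of how the multiplicative ratio is interpreted when $P(y) = 0$; but as the definition demands $\wh{P}(y)/P(y) \in [1/c,c]$ for \emph{every} $y$, the natural (and only consistent) reading is that $\wh{P}(y)$ must vanish whenever $P(y)$ does, which is exactly what the argument requires, so no further work is needed.
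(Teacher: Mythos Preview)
Your proposal is correct and follows essentially the same approach as the paper: convert the circuit into an indegree-$2$ Bayes net with uniform random inputs, take the marginal on the single output bit, and use the multiplicative guarantee to distinguish $P(Y=1)=0$ from $P(Y=1)\ge 2^{-n}$. Your treatment is slightly more explicit about the fan-in preprocessing and the zero-probability edge case, but the argument is the same.
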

\begin{proof}
Let $\mathcal{A}$ be a hypothetical randomized polynomial time algorithm for $\bayesmarmult$. Let $C:\{0,1\}^n\rightarrow \{0,1\}$ be the instance of the $\circeval$ problem. Let $b$ denote the output bit of $C$. We also denote by $C$, the joint distribution of all its gates when its input bits are chosen randomly. 

As argued in the proof of Theorem~\ref{thm:bayesmar}, $C$ is a Bayes net of indegree at most 2 over all its gates without loss of generality.  We invoke $\mathcal{A}$ with the subset being $\{b\}$ and the mulplicative ratio being any constant $c>1$. If $C$ has no satisfying input, then $b\sim \ber(0)$. If $C$ has a satisfying input, $b\sim \ber(p)$ for some $p\ge 1/2^n$. Therefore, a $c$-factor approximation of the bias of $b$ would decide $\circeval$. 
\end{proof}

\paragraph{Acknowledgement} A.B.~was partially supported by an NRF Fellowship for AI (R-252-000-B13-281), a startup grant (R-252-000-A33-133), and an Amazon Research Award (R-252-000-A61-720). S.G.~was supported by A.B.'s NRF Fellowship for AI (R-252-000-B13-281). S.K.~was supported by NSF 1846300 CAREER and NSF 1815893 grants. N.V.V.~was supported in part by National Science Foundation HDR:TRIPODS-1934884.

\bibliographystyle{abbrvnat}
\bibliography{ref}

\end{document}